\title{Approximating the Directed Hausdorff Distance}
\author{Oliver A. Chubet}{Department of Computer Science, North Carolina State University}{oachubet@ncsu.edu}{}{}
\author{Parth M. Parikh}{Department of Computer Science, North Carolina State University}{pmparikh@ncsu.edu}{}{}
\author{Donald R. Sheehy}{Department of Computer Science, North Carolina State University}{don.r.sheehy@gmail.com}{}{}
\author{Siddharth S. Sheth}{Department of Computer Science, North Carolina State University}{sheth.sid@gmail.com}{}{}
\authorrunning{O.\,A. Chubet and P.\,M. Parikh and D.\,R. Sheehy and S.\,S. Sheth}
\keywords{Hausdorff distance, greedy trees} %TODO mandatory; please add comma-separated list of keywords
\newcommand{\ball}{\mathbf{ball}}
\newcommand{\e}{\varepsilon}
\newcommand{\dist}{\mathbf{d}}
\newcommand{\spread}{\Delta}
\newcommand{\algoHD}{\textsc{Hausdorff}\xspace}
\newcommand{\algokHD}{\textsc{k-Hausdorff}\xspace}
\newcommand{\pred}{T}
\newcommand{\points}{\mathsf{pts}}
\newcommand{\nodectr}[1]{\mathsf{ctr}(#1)}
\newcommand{\noderad}[1]{\mathsf{rad}(#1)}
\newcommand{\node}[1]{\boldsymbol{#1}}
\newcommand{\lb}{\ell}
\newcommand{\ctrdist}{\dist_{\mathsf{ctr}}}
\newcommand{\kth}{k^\textrm{th}}
\newcommand{\khdist}[1]{\dist_h^{(#1)}}
\begin{document}
    \thispagestyle{empty}
    \maketitle

    % flatex input: [abstract.tex]
\begin{abstract}
    The Hausdorff distance is a metric commonly used to compute the set similarity of geometric sets.
    For sets containing a total of $n$ points, the exact distance can be computed na\"{i}vely in $O(n^2)$ time.
    In this paper, we show how to preprocess point sets individually so that the Hausdorff distance of any pair can then be approximated in linear time.
    We assume that the metric is doubling.
    The preprocessing time for each set is $O(n\log \spread)$ where $\spread$ is the ratio of the largest to smallest pairwise distances of the input.
    In theory, this can be reduced to $O(n\log n)$ time using a much more complicated algorithm.
    We compute $(1+\e)$-approximate Hausdorff distance in $(2 + \frac{1}{\e})^{O(d)}n$ time in a metric space with doubling dimension $d$.
    The $k$-partial Hausdorff distance ignores $k$ outliers to increase stability.
    Additionally, we give a linear-time algorithm to compute directed $k$-partial Hausdorff distance for all values of $k$ at once with no change to the preprocessing.
\end{abstract}

% flatex input end: [abstract.tex]

%------------------------------ Text -------------------------------------

    \section{Introduction}\label{sec:intro}
    % flatex input: [intro.tex]
The Hausdorff distance is a metric on compact subsets of a metric space.
Let $(X,\dist)$ be a metric space and let $A$ and $B$ compact subsets of $X$.
The distance from a point $x\in X$ to the set $B$ is $\dist(x,B) := \min_{b\in B}\dist(x,b)$.
The \emph{directed Hausdorff distance} is $\dist_h(A,B) := \max_{a\in A}\dist(a,B)$, and the (undirected) \emph{Hausdorff distance} is $\dist_H(A,B) := \max\{\dist_h(A,B), \dist_h(B,A)\}$.
This definition leads directly to a quadratic time algorithm for finite sets.

In our approach, we first preprocess the input sets $A$ and $B$ individually into linear-size metric trees, specifically, greedy trees~\cite{chubet23proximity}.
All the points of a set are stored as leaves in the greedy tree and an internal node approximates the leaves in its subtree by a ball.
A subset of greedy tree nodes of radius at most $\e$ such that every point of the set is a leaf of exactly one node forms an $\e$-net of the underlying set.
The greedy tree can be used to construct $\e$-nets of the underlying set at different scales.
Our algorithms use greedy trees to maintain such nets for both sets and tracks which nodes of $B$ are close to nodes of $A$.
This approach batches the searches of the na\"{i}ve algorithm and results in fewer distance computations.

If the input contains a total of $n$ points, then preprocessing takes $\left(\frac{1}{\e}\right)^{O(d)}n\log \spread$ time.
Here $\spread$ is the \emph{spread} of the input sets, and $d$ is the doubling dimension of the metric space.
We present an algorithm that computes a $(1+\e)$-approximation of the directed Hausdorff distance from $A$ to $B$ in $\left(2+\frac{1}{\e}\right)^{O(d)}n$ time after preprocessing.
The preprocessing is especially useful when the same sets are involved in multiple distance computations.

One difficulty of working with the Hausdorff distance is its sensitivity to outliers.
There are several variations of the Hausdorff distance that reduce the sensitivity to outliers.
Among the simplest is the following definition where one can ignore up to $k$ outliers.
The \emph{$k^\text{th}$-partial} directed Hausdorff distance is $\dist_h^{(k)}(A,B) := \min_{S\in A^{(k)}} \dist_h(S,B)$ where $A^{(k)}$ is the set of all subsets of $A$ with $k$ points removed~\cite{Huttenlocher93Comparing}.

One might expect that this is a harder problem than computing the directed Hausdorff distance.
However, we show that for approximations in low dimensions, this is not the case; the worst-case running time matches that of our algorithm for the standard case (up to an additive $\log\spread$ term).
We present an algorithm that computes a $(1+\e)$-approximation of $\dist_h^k(A, B)$ for all values of $k$ in $\left(2+\frac{1}{\e}\right)^{O(d)}n + O(\log \spread)$ time after the same preprocessing as before.
That is, the output is a list of $n+1$ approximate values of $\dist_h^k(A, B)$ for $k \in \{0, \ldots, n\}$ and this list is produced in linear time.

% flatex input end: [intro.tex]

%------------------------------ Text -------------------------------------

    \section{Related Work}\label{sec:related}
    % flatex input: [related.tex]
We focus on general metrics with bounded doubling dimension.
In the general setting, one may not expect a subquadratic algorithm for computing the Hausdorff distance, however, there are classes of metric spaces for which the Hausdorff distance can be computed more quickly.
For example, given point sets in the plane, fast nearest neighbor search data structures~\cite{Alt95Approximate} are used to give an $O(n\log n)$ time algorithm.
If one allows for approximate answers, $O(n\log n)$ time algorithms are possible in low-dimensional Euclidean spaces~\cite{Arya98Optimal}.

It is not easy to get an asymptotic improvement on the na\"{i}ve Hausdorff distance algorithm without using an efficient data structure in higher dimensions.
% Alt et al.~\cite{Alt95Approximate} give an $O(n\log n)$ time algorithm to compute the exact Hausdorff distance in the plane using Voronoi diagrams.
In practice, there exist many heuristics to speed up the na\"{i}ve algorithm~\cite{Chen17Local, Ryu21Efficient, Taha15Efficient}.
Another popular technique is to use a geometric tree data structure.
Zhang et al.~\cite{Zhang17Efficient} use octrees to compute the exact Hausdorff distance between 3D point sets.
Nutanong et al.~\cite{Nutanong11Incremental} present an algorithm to compute the exact Hausdorff distance using R-trees.
% Look for more tree algorithms.

% Fractional Hausdorff
The partial Hausdorff distance was first introduced by Huttenlocher et al.~\cite{Huttenlocher93Comparing}.
% They present an algorithm to compute partial Hausdorff distances under translation.
Although there has been considerable interest in this pseudometric, most results are experimental and to the best of our knowledge, a theoretical running time bound does not exist.
We give an algorithm to compute approximate partial Hausdorff distance that runs in linear time after preprocessing.

The algorithms presented in this paper maintain both input sets as metric trees and traverse them simultaneously.
This approach is similar to that of Nutanong et al.~\cite{Nutanong11Incremental}, but more generally it has been studied in the machine learning literature as dual-tree algorithms~\cite{gray00n-body}.
Search problems such as the $k$-nearest-neighbor search~\cite{curtin13tree}, range search~\cite{curtin13tree}, and the all-nearest-neighbor search~\cite{ram09linear} have been explored using dual-tree algorithms.
Any all-nearest-neighbor search algorithm where the query and reference sets are different can also be used to compute the directed Hausdorff distance.
Ram et al.~\cite{ram09linear} present an all-nearest-neighbor algorithm that runs in linear time under some strict assumptions about the underlying metrics.
The running time of their algorithm depends on a constant called the degree of bichromaticity.
Moreover, this dependence is exponential in the degree of bichromaticity, and high values can result in a poor bound~\cite{curtin15plug}.

% flatex input end: [related.tex]

%------------------------------ Text -------------------------------------

    \section{Background}\label{sec:background}
    % flatex input: [background_journal.tex]
\subsection{Doubling Metrics}
Let $(X,\dist)$ be a \emph{metric space}.
A \emph{metric ball} is a subset $\ball(c, r)$ of $X$, with center $c \in X$ and radius $r \ge 0$ such that $\ball(c, r) := \{x \in X \mid \dist(x, c) \le r\}$.
% Let $A \subseteq X$. The \emph{diameter} of $A$ is the maximum pairwise distance between any two points in $A$.
The \emph{spread} $\spread$ of $A \subseteq X$ is the ratio of the diameter to the smallest pairwise distance of points in $A$.
% We assume that $\spread$ is $2^{O(n)}$.
The \emph{doubling dimension} of $X$, denoted $\dim(X)$, is the smallest real number $d$ such that any metric ball in $X$ can be covered by at most $2^d$ balls of half the radius.
If $\dim(X)$ is bounded then $X$ is a \emph{doubling metric}.
The set $A$ is \emph{$\lambda$-packed} if $d(a,b)\ge \lambda$ for any distinct $a,b\in A$.
% A collection of sets $\CC$ \emph{covers} $A$ if $A \subseteq \bigcup_{S\in\CC} S$.
% The \emph{doubling constant} of the metric space is the minimum number $\rho$ such that any ball can be covered by at most $\rho$ balls of half the radius.
% The \emph{doubling dimension} is $\dim(X) := \log_2 \rho$.
% Verger-Gaugry~\cite{vergergaugry05covering} shows that $\R^d$ is doubling and $\dim(\R^d) = O(d)$.
The following lemma by Krauthgamer and Lee~\cite{krauthgamer04navigating} limits the cardinality of packed and bounded sets.
\begin{lemma}[Standard Packing Lemma]\label{lem:std_packing}
    Let $(X, \dist)$ be a metric space with $\dim(X) = d$.
    If $Z \subseteq X$ is $r$-packed and can be covered by a metric ball of radius $R$ then $|Z| \le \left(\frac{4R}{r}\right)^d$.
\end{lemma}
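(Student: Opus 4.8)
The plan is to iterate the doubling property until the covering balls are so small that each can contain at most one point of the packed set $Z$, and then count. First I would apply the definition of doubling dimension repeatedly: since $\dim(X) = d$, every ball of radius $\rho$ is covered by $2^d$ balls of radius $\rho/2$. Starting from the ball $\ball(c, R)$ that is assumed to contain $Z$ and applying this $k$ times, I obtain a cover of $\ball(c, R)$ by $2^{dk}$ balls, each of radius $R/2^k$.

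Next I would exploit the packing hypothesis to control how many points of $Z$ a single small ball can hold. Any two points lying in a common ball of radius $\rho$ are at distance at most $2\rho$, while distinct points of the $r$-packed set $Z$ are at distance at least $r$. Hence a ball of radius strictly less than $r/2$ contains at most one point of $Z$. Accordingly I would take $k$ to be the smallest integer with $R/2^k < r/2$, so that each of the $2^{dk}$ covering balls contributes at most one point.

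Combining the two observations yields $|Z| \le 2^{dk} = (2^k)^d$, and the minimality of $k$ gives $2^{k-1} \le 2R/r$, hence $2^k \le 4R/r$ and $|Z| \le \left(\frac{4R}{r}\right)^d$, as claimed. The only point requiring care is pinning down the constant: getting exactly $4R/r$ depends on using the strict threshold $R/2^k < r/2$ together with the minimality of $k$, rather than a cruder choice. A harmless edge case is $|Z| \le 1$, where the bound is trivial; whenever $|Z| \ge 2$ the set has diameter at least $r$, forcing $R \ge r/2$, so that $4R/r \ge 2$ and the estimate is genuinely informative. I do not expect any serious obstacle here, as the argument is a direct unwinding of the doubling definition against the packing condition.
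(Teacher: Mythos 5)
Your proof is correct. Note that the paper never proves this lemma itself---it imports it directly from Krauthgamer and Lee~\cite{krauthgamer04navigating}---and your argument is exactly the standard iterated-doubling proof underlying that citation: cover $\ball(c,R)$ by $2^{dk}$ balls of radius $R/2^k$, observe that a ball of radius strictly less than $r/2$ contains at most one point of an $r$-packed set, and take $k$ minimal with $R/2^k < r/2$, so that $2^{k-1}\le 2R/r$ and hence $|Z| \le 2^{dk} = (2^k)^d \le \left(\frac{4R}{r}\right)^d$. One small caveat in your write-up: the case $|Z|=1$ is not literally ``trivial''---if $4R < r$ then $(4R/r)^d < 1$ and the stated inequality actually fails for a singleton covered by a tiny ball, so the lemma as quoted silently assumes this degenerate situation away. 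That is a defect of the statement (shared by most informal statements of packing lemmas), not of your argument; as you correctly observe, in the only substantive case $|Z|\ge 2$ one has $R \ge r/2$, the minimal $k$ is at least $1$, and your count goes through verbatim.
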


\subsection{Greedy Permutations and Predecessor Maps}
Let $P = (p_0,\ldots, p_{n-1})$ be a finite sequence of points in a metric space with distance function $\dist$.
The $i^{th}$-\emph{prefix} is the set $P_i = \{p_0,\ldots , p_{i-1}\}$ containing the first $i$ points of $P$.

The sequence $P$ is a \emph{greedy permutation} if for all $i$,
\[
  \dist(p_i, P_i) = \max_{p\in P}\dist(p, P_i).
\]
For a constant $\alpha > 1$, the sequence $P$ is an \emph{$\alpha$-approximate greedy permutation} if for all $i$,
\[
    \dist(p_i, P_i) \ge \alpha \max_{p\in P}\dist(p, P_i).
\]
The point $p_0 \in P$ is called the \emph{root} of the greedy permutation.

The \emph{predecessor mapping} $\pred:P \setminus \{p_0\}\to P$ maps each non-root point $p_i$ in $P$ to an approximate nearest neighbor in the prefix $P_i$.
The approximate nearest predecessor of a point need not be unique, however for the sake of construction, we assume we have chosen one.
% The \emph{insertion distance} of a non-root point $p$ is $\dist(p, T(p))$, denoted $\e_p$.
% The root has insertion distance $\infty$.
The \emph{insertion distance} of a point $p \in P$, denoted $\e_p$, is defined as,
\[
    \e_ p = \begin{cases}
        \dist(p, T(p)) & p \ne p_0\\
        \infty & p = p_0.
    \end{cases}
\]
An \emph{$\alpha$-scaling} predecessor map is one where every point has an insertion distance that is at most $1/\alpha$-times the predecessor's insertion distance, i.e., for every $p \in P\setminus\{p_0\}$,
\[
    \e_p\le \frac{1}{\alpha}\e_{T(p)}.
\]
We will later show how a greedy permutation and a predecessor map can be used to build a hierarchical search data structure.

\paragraph*{Constructing Greedy Permutations}
We present a simple algorithm to construct $\alpha$-approximate greedy permutations.
The input is a point set in a metric space and an optional root point.
If no root is specified, then an arbitrary point can be chosen as the root.
The algorithm constructs the permutation iteratively while simultaneously constructing an $\alpha$-scaling predecessor map.
Each uninserted point keeps track of its $\alpha$-approximate nearest neighbor among the inserted points.
This point is called the \emph{parent} of the uninserted point.

We initialize the output permutation with only the root as an inserted point and all other points have the root as the parent.
In each iteration, the uninserted point $p$ maximizing the distance to its parent is inserted.
The predecessor map is updated to store the parent of $p$ as its predecessor.
% For all points $q \in P \setminus S$ with parent $q'$, update the parent to $p$ if $\dist(p,q) \le \frac{1}{\alpha}\dist(q',q)$.
For every uninserted point, update its parent to $p$ if $p$ is closer than its current parent by a factor of $\alpha$.
We refer to these parent updates as $\alpha$-lazy updates.
An example of this algorithm is shown in Figure~\ref{fig:greedy_perm}.

In each iteration, the inserted point is $\alpha$-approximately the farthest point and so the algorithm constructs an $\alpha$-approximate greedy permutation.
It is easy to see that if $\alpha=1$ then the output is an exact greedy permutation.
The $\alpha$-lazy updates guarantee that every point has a distance to its parent that is at most $1/\alpha$ times that parent's insertion distance.
So, it follows that the insertion distance of a point is at most $1/\alpha$-times that of its predecessor.
Therefore, the constructed predecessor map is $\alpha$-scaling.

% flatex input: [../figs/greedyperm.tex]
\begin{figure}
    \begin{center}
    \begin{tikzpicture}[>={stealth[scale=3]}]
        \draw[thin] (-0.15,-0.15) rectangle (13.65,3.15);
        \draw[thin] (0,0) rectangle (3,3);
        
        \filldraw (0.1, 0.1) circle (0.02) node[right, xshift=5] {$a$};
        \filldraw (1.7, 1.5) circle (0.02) node[below] {$b$};
        \filldraw (2.5, 2.8) circle (0.02) node[right] {$c$};
        \filldraw (2.4, 1.7) circle (0.02) node[right] {$d$};

        \draw (0.1, 0.1) circle (0.05);

        \draw[gray, dashed, ->] (1.7, 1.5) -- (0.15, 0.15);
        \draw[dashed, ->] (2.5, 2.8) -- (0.1, 0.2);
        \draw[gray, dashed, ->] (2.4, 1.7) -- (0.2, 0.1);

        \def\xshft{3.5}
        \draw[thin] (0+\xshft,0) rectangle (3+\xshft,3);
        
        \filldraw (0.1+\xshft, 0.1) circle (0.02) node[right, xshift=5] {$a$};
        \filldraw (1.7+\xshft, 1.5) circle (0.02) node[below] {$b$};
        \filldraw (2.5+\xshft, 2.8) circle (0.02) node[right] {$c$};
        \filldraw (2.4+\xshft, 1.7) circle (0.02) node[right] {$d$};

        \draw (0.1+\xshft, 0.1) circle (0.05);
        \draw (2.5+\xshft, 2.8) circle (0.05);

        \draw[->] (2.45+\xshft, 2.75) -- (0.1+\xshft, 0.2);
        
        \draw[dashed, ->] (1.65+\xshft, 1.45) -- (0.15+\xshft, 0.15);
        \draw[gray, dashed, ->] (2.4+\xshft, 1.7) -- (2.5+\xshft, 2.75);

        \def\xshft{7}
        \draw[thin] (0+\xshft,0) rectangle (3+\xshft,3);

        \filldraw (0.1+\xshft, 0.1) circle (0.02) node[right, xshift=5] {$a$};
        \filldraw (1.7+\xshft, 1.5) circle (0.02) node[below] {$b$};
        \filldraw (2.5+\xshft, 2.8) circle (0.02) node[right] {$c$};
        \filldraw (2.4+\xshft, 1.7) circle (0.02) node[right] {$d$};

        \draw (0.1+\xshft, 0.1) circle (0.05);
        \draw (2.5+\xshft, 2.8) circle (0.05);
        \draw (1.7+\xshft, 1.5) circle (0.05);

        \draw[->] (2.45+\xshft, 2.75) -- (0.1+\xshft, 0.2);
        \draw[->] (1.65+\xshft, 1.45) -- (0.18+\xshft, 0.15);
        
        \draw[dashed, ->] (2.4+\xshft, 1.7) -- (2.5+\xshft, 2.75);
    
        \def\xshft{10.5}
        \draw[thin] (0+\xshft,0) rectangle (3+\xshft,3);

        \filldraw (0.1+\xshft, 0.1) circle (0.02) node[right, xshift=5] {$a$};
        \filldraw (1.7+\xshft, 1.5) circle (0.02) node[below] {$b$};
        \filldraw (2.5+\xshft, 2.8) circle (0.02) node[right] {$c$};
        \filldraw (2.4+\xshft, 1.7) circle (0.02) node[right] {$d$};

        \draw (0.1+\xshft, 0.1) circle (0.05);
        \draw (2.5+\xshft, 2.8) circle (0.05);
        \draw (1.7+\xshft, 1.5) circle (0.05);
        \draw (2.4+\xshft, 1.7) circle (0.05);

        \draw[->] (2.45+\xshft, 2.75) -- (0.1+\xshft, 0.2);
        \draw[->] (1.65+\xshft, 1.45) -- (0.18+\xshft, 0.15);
        \draw[->] (2.4+\xshft, 1.75) -- (2.5+\xshft, 2.7);

    \end{tikzpicture}

    \end{center}
    \caption{%The global lower bound $L$ is the maximum of the local lower bounds. $L$ is a lower bound of the directed Hausdorff distance.
    In this figure we compute the greedy permutation and predecessor mapping on a finite metric space $(\{a,b,c,d\}, \dist)$ with seed $a$ and $\alpha=2$.
    The dotted lines indicate parents.
    The darker dotted line highlights the point furthest from its parent.
    The solid lines indicate predecessors.
    Initially only $a$ is inserted and it is the parent of every uninserted point.
    When $c$ is inserted, it becomes the new parent of $d$ because $\alpha\dist(c,d) < \dist(a,d)$.
    The parent of $b$ is still $a$ even though $c$ is closer.
    Then, the insertion of $b$ does not change the parent of $d$ because $c$ is still an $\alpha$-approximate nearest neighbor.
    The completed permutation is $(a,c,b,d)$ and the predecessor map is $b \mapsto a, c \mapsto a, d \mapsto c$.
    }
    \label{fig:greedy_perm}
\end{figure}
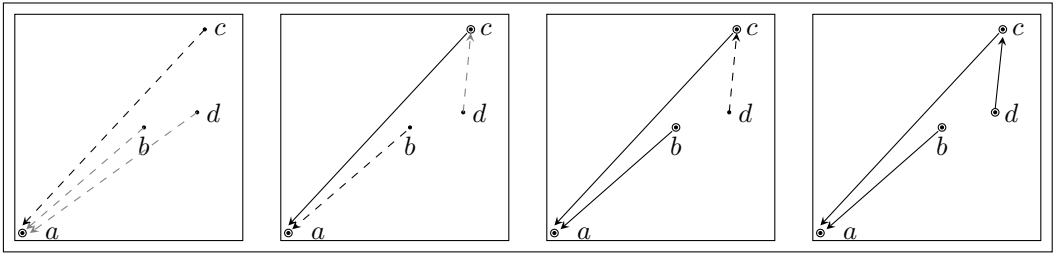

The algorithm presented above was analyzed by Gonzalez~\cite{Gonzalez85clustering} in 1985 and it runs in $O(n^2)$ time.
More efficient algorithms can construct greedy permutations in $O(n\log \spread)$ time for low-dimensional data~\cite{har-peled06fast,sheehy20greedypermutations}.
There also exists a randomized algorithm that can compute greedy permutations in $O(n\log n)$ time~\cite{har-peled06fast}.

\subsection{Greedy Trees}
A balltree~\cite{omohundro89five} is defined by recursively partitioning compact subsets of a metric space and representing the partitions in a binary tree.
For a ball tree on $A$, every node $x$ has a center $\nodectr{x} \in A$, and a radius $\noderad{x}$.
% Although multiple nodes can be centered at the same point, all of our algorithms only consider one such node at a time, so the notation is unambiguous.
The set of all points contained in the leaves of $x$ is denoted by $\points(x)$.
The radius $\noderad{x}$ is such that $\points(x) \subseteq \ball(\nodectr{x}, \noderad{x})$.
% The radius $r_a$ of node $\node a$ is defined as $r_a := \max_{x \in \points(\node a)}\dist(a,x)$.
Two nodes $x$ and $y$ are \emph{independent} if $\points(x) \cap \points(y) = \varnothing$.
We denote $\dist(\nodectr{x},\nodectr{y})$ as $\ctrdist(x,y)$.
%
% Although we use this notation for simplicity, it should be noted that $\node a$ need not be the unique node centered at $a$.
% There could be many such nodes each with a different radius.
% This is made clear in the paper where needed.
% Each node of the tree is a metric ball that covers the points in its subtree.
% A ball tree on $A$ is a binary tree formed by recursively partitioning $A$.

A greedy tree~\cite{chubet23proximity} is a ball tree that can be built on $A$ using the greedy permutation on $A$ and a predecessor map on this permutation.
The root of the greedy tree is centered at the first point of the greedy permutation.
The rest of the tree is constructed incrementally.
At all times, there is a unique leaf centered at each of the previously inserted points.
For every point $p$ in the permutation, let $q$ denote its predecessor.
Create two nodes, one centered at $p$ and one centered at $q$.
Attach these nodes as the right and left children respectively of the leaf centered at $q$ (see Figure~\ref{fig:greedytreeinsert}).

The radius of a node is the distance from the center to the farthest leaf in its subtree.
Thus, leaves have a radius of zero.
The radii of all other nodes can be computed by traversing the subtrees.
The radius of a child is never greater than the radius of its parent.
If the greedy permutation is $\alpha$-approximate, then the resulting greedy tree is said to have a parameter $\alpha$.

% The greedy permutation itself requires $O(n\log n)$ time to compute~\cite{har-peled06fast}.
The algorithm presented above has been analyzed in more detail in Chubet et al.~\cite{chubet23proximity}.
The following theorem appears as Theorem~{5.1} in Chubet et al.~\cite{chubet23proximity}.
\begin{restatable}{restate}{structuretheorem}
    \label{thm:structure}
    Let $G$ be a greedy tree with $\alpha > 1$.
    Then the following properties hold:
    \begin{enumerate}
        \item The radius of a node $x$ is bounded, $\noderad{x} \le \frac{\e_p}{\alpha-1}$.
        \item Let $X$ be a set of pairwise independent nodes from $G$.
        The centers of $X$ are $\frac{(\alpha-1)r}{\alpha}$-packed, where $r$ is the minimum radius of any parent of a node in $X$.
        \item The height of $G$ is $2^{O(d)}\log\spread$.
    \end{enumerate}
\end{restatable}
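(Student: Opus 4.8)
\section{Proof proposal}

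The plan is to establish the three properties in order, since the later ones rely on the first.

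For property (1), I would first identify the leaves of the subtree rooted at a node $\node{x}$ centered at $p$: by induction on the insertion order, using the construction rule that a newly inserted point attaches to the current leaf at its predecessor, these are exactly the points $p'$ whose predecessor chain $p' = q_0,\, q_1 = T(q_0),\, \ldots,\, q_m = p$ reaches $p$ inside that subtree. For any such $p'$ the triangle inequality gives $\dist(p,p') \le \sum_{i=1}^{m} \e_{q_{i-1}}$, and the $\alpha$-scaling bound $\e_{q_{i-1}} \le \frac{1}{\alpha}\e_{q_i}$ turns the right-hand side into a geometric series that sums to at most $\frac{\e_p}{\alpha-1}$. Taking the maximum over all leaves of the subtree yields $\noderad{\node{x}} \le \frac{\e_p}{\alpha-1}$. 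The only delicate point is the subtree/chain correspondence, which the incremental construction makes routine.

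For property (2), I would fix two independent nodes with centers $p$ and $q$, labeled so that $q$ is inserted later. Because $p$ lies in the prefix present at $q$'s insertion and $T(q)$ is an $\alpha$-approximate nearest neighbor in that prefix, we get $\e_q \le \alpha\,\dist(p,q)$, i.e.\ the lower bound $\dist(p,q) \ge \frac{\e_q}{\alpha}$. It then suffices to show $\e_q \ge (\alpha-1)r$, equivalently that the minimum parent radius satisfies $r \le \frac{\e_q}{\alpha-1}$. I would obtain this from property (1) combined with a structural observation: the parent of the node centered at $q$ is a node on the ``spine'' of its center whose subtree contains only points inserted no earlier than $q$, so its radius is controlled by $\e_q$ rather than by the (potentially much larger) insertion distance of the parent's own center. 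I expect this step to be the main obstacle, because pinning down the exact constant requires a case analysis of whether the node is a left or right child and careful tracking of the factors; a direct application of property (1) to the parent's center is too loose to give the stated bound.

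For property (3), I would bound the length of an arbitrary root-to-leaf path by splitting its edges into two types. A right-child edge moves to a new center whose insertion distance is at most $\frac{1}{\alpha}$ times that of the previous center; since every insertion distance lies between the smallest pairwise distance and the diameter, a ratio of $\spread$, there are only $O(\log\spread)$ such edges. A left-child edge descends the spine of a fixed center. Grouping the spine points by dyadic insertion-distance scale, the points sharing a predecessor within one scale $[\rho,2\rho)$ are $\frac{\rho}{\alpha}$-packed inside a ball of radius $2\rho$, so the Standard Packing Lemma bounds their number by $2^{O(d)}$. Summing these per-scale counts telescopes over the $O(\log\spread)$ dyadic scales, giving $2^{O(d)}\log\spread$ left-child edges in total. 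Adding the two contributions yields height $2^{O(d)}\log\spread$.
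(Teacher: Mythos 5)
First, a point of comparison you could not have known: this paper contains no proof of Theorem~\ref{thm:structure} at all. It is restated verbatim from Chubet et al.~\cite{chubet23proximity} (their Theorem~5.1), so your attempt has to be judged on its own merits rather than against an in-paper argument. On those merits, your treatments of properties (1) and (3) are essentially correct. For (1), the predecessor-chain plus geometric-series argument is exactly right: every leaf of the subtree of $x$ reaches $\nodectr{x}$ by a predecessor chain inside that subtree, and $\alpha$-scaling bounds the chain length by $\e_p(\frac{1}{\alpha}+\frac{1}{\alpha^2}+\cdots)=\frac{\e_p}{\alpha-1}$. For (3), counting right-child edges by $\alpha$-scaling and bounding each spine's contribution per dyadic scale by $2^{O(d)}$ via Lemma~\ref{lem:std_packing} is sound; just be aware that the telescoping across spines also needs the fact that insertion distances are non-increasing along the permutation (true for the paper's lazy construction), and that the root spine needs the diameter in place of $\e_{p_0}=\infty$.

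Property (2) is where the genuine gap lies, and you located it but did not close it. Your reduction is fine: $\dist(p,q)\ge\e_q/\alpha$ for the later-inserted center $q$, so it suffices that the parent $z$ of the node centered at $q$ has $\noderad{z}\le\frac{\e_q}{\alpha-1}$. That holds when this node is a left child, since then $z$ is also centered at $q$ and property (1) applies verbatim. But when it is a right child, $z$ is centered at $w=T(q)$, and $\points(z)$ contains $q$ itself at distance $\e_q$ from $w$ together with all of $q$'s subtree, so the chain argument yields only $\noderad{z}\le\e_q+\frac{\e_q}{\alpha-1}=\frac{\alpha\e_q}{\alpha-1}$, hence only $\frac{(\alpha-1)r}{\alpha^2}$-packing---a factor $\alpha$ short of the claim. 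This loss is not a matter of insufficient cleverness along your route: with $\alpha=2$, running the paper's own construction on the four planar points $w=(0,0)$, $p=(1.255,0.4417)$, $q=(1,0)$, $t=(1.24,-0.18)$ gives the permutation $(w,p,q,t)$ with predecessors $p,q\mapsto w$ and $t\mapsto q$; the node with points $\{p\}$ and the node with points $\{q,t\}$ are independent, their parents have radii about $1.33$ and $1.25$, yet $\dist(p,q)\approx 0.51<\frac{(\alpha-1)r}{\alpha}\approx 0.63$. So the stated constant cannot be derived from the properties you invoke (indeed, from the hypotheses as restated here); whatever additional invariant of the construction Chubet et al.\ use to justify it is precisely the missing ingredient. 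The mitigating fact is that the weaker $\frac{(\alpha-1)r}{\alpha^2}$ bound your approach does deliver suffices for every application in this paper, since property (2) enters only through packing counts, where the extra factor of $\alpha$ merely changes constants inside the $2^{O(d)}$ and $\left(2+\frac{1}{\e}\right)^{O(d)}$ terms.
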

Theorem~\ref{thm:structure} allows us to bound the running time of the greedy tree construction algorithm.
Constructing the tree topology takes $O(n)$ time and computing radii takes $O(n\log \spread)$ time.
Therefore, given a greedy permutation and the approximate nearest predecessors, the corresponding greedy tree is computed in $O(n\log \spread)$ time.

% \begin{proof}
%     \begin{enumerate}
%         \item
%         Let $\node{p}$ be a node in $G$ and $q \in \points(\node p)$ be such that $r_p = \dist(p,q)$.
%         Additionally, let $\node{x}$ and $\node{y}$ be nodes containing $q$ such that $T(x) = p$ and $T(y) = x$.
%         By the triangle inequality,
%         \[
%             r_p \le \dist(p,x) + \dist(x,q) \le \e_x + r_x.
%         \]
%         Moreover, by construction, $\e_x \le \frac{1}{\alpha} \e_p$, so $r_p \le \frac{1}{\alpha}\e_p + r_x$, so by induction on the height,
%         \[
%             r_p \le \sum_{i=1}^h \e_p \left(\frac{1}{\alpha} \right)^i
%                 \le \sum_{i=0}^\infty \e_p \left(\frac{1}{\alpha} \right)^{i+1} \le \frac{\e_p}{\alpha-1}.
%         \]
%         \item
%         Let $X$ be a set of pairwise independent nodes from $G$, and let $\node{x}$ be an arbitrary node in $X$ with a parent $\node{p}$.
%         Then, $r_p \le \e_x + r_x \le \e_x + \frac{\e_x}{\alpha-1}$.
%         It follows that $\e_x \ge \frac{\alpha-1}{\alpha}r$ and therefore, $H$ is $\frac{(\alpha-1)r}{\alpha}$-packed, where $r$ is the minimum radius of any parent of a node in $X$. \qedhere
%     \end{enumerate}
% \end{proof}

% flatex input: [../figs/greedytreeinsert_doublecol.tex]
\tikzset{
  treenode/.style = {align=center, inner sep=0pt, text centered,
    font=\sffamily},
  arn_n/.style = {treenode, circle, white, font=\sffamily\bfseries, draw=black,
    fill=blue!50, text width=1.25em},
  arn_r/.style = {treenode, circle, black, draw=blue!50,
    text width=1.25em, very thick}
}

\begin{figure*}
    \begin{center}
        \begin{tikzpicture}[scale=0.7, ->,level/.style={sibling distance = 3cm/#1, level distance = 1cm}]
            % \begin{scope}[yshift=-1cm, xshift=-1cm]
                \node[label=$a$] at (2,7) {};
                \node[label=$b$] at (3,7) {};
                \node[label=$c$] (5) at (4,7) {};
                \node[label=$d$] (5) at (5,7) {};
                \node[label=$e$] (5) at (6,7) {};
                \node[label=$f$] (5) at (7,7) {};
            
                % b to a
                \draw[<-, black!100, thick] (2,7) to [bend right=80] (3,7);
                % c to b
                \draw[<-, black!100, thick] (3,8) to [bend left=80] (4,8);
                % d to a
                \draw[<-, black!100, thick] (2,8) to [bend right=-70] (5,8);
                % c to e
                \draw[<-, black!100, thick] (4,7) to [bend right=80] (6,7);
                % b to f
                \draw[<-, black!100, thick] (3.2,7) to [bend left=-80] (7,7);
            % \end{scope}
        \end{tikzpicture}

        \begin{tikzpicture}
            \draw (0,0) -- (14,0);
        \end{tikzpicture}

        \vspace{0.5cm}
        
        \begin{tikzpicture}[scale=0.7, ->,level/.style={sibling distance = 3cm/#1, level distance = 1cm}]
            \begin{scope}[xshift=-3.5cm]
                \node [arn_n] {$u_0$} node[below, yshift=-8]{$a$};
                % \draw[-, black!100, thick] (0.5,-3.5) to (0.5,0.5);
            \end{scope}
            \begin{scope}[xshift=-1cm, level/.style={sibling distance = 1cm/#1,
                level distance = 1.25cm}]
                \node [arn_r] {$u_0$}
                child{ node [arn_n] {$u_1$} node[below, yshift=-8]{$a$}}
                child{node [arn_n] {$v_0$} node[below, yshift=-5]{$b$}};
                % \draw[-, black!100, thick] (1,-3.5) to (1,0.5);
            \end{scope}
            \begin{scope}[xshift=2.5cm, level/.style={sibling distance = 2cm/#1,
                level distance = 1cm}]
                \node [arn_r] {$u_0$}
                    child{ node [arn_r] {$u_1$}}
                    child{node [arn_r] {$v_0$}
                        child{ node [arn_n] {$v_1$} node[below, yshift=-5]{$b$}}
                        child{ node [arn_n] {$w_0$} node[below, yshift=-8]{$c$}}
                    };
                % \draw[-, black!100, thick] (2,-3.5) to (2,0.5);
            \end{scope}
            \begin{scope}[xshift=7.5cm, level/.style={sibling distance = 2cm/#1,
                level distance = 1cm}]
            \node [arn_r] {$u_0$}
                child{ node [arn_r] {$u_1$}
                    child{ node [arn_n] {$u_2$} node[below, yshift=-8]{$a$}}
                    child{ node [arn_n] {$x$} node[below, yshift=-5]{$d$}}
                }
                child{node [arn_r] {$v_0$}
                    child{ node [arn_r] {$v_1$}}
                    child{ node [arn_r] {$w_0$}}
                };
                % \draw[-, black!100, thick] (2,-3.5) to (2,0.5);
            \end{scope}
            \begin{scope}[xshift=13cm, level/.style={sibling distance = 2.5cm/#1,
                level distance = 1cm}]
            \node [arn_r] {$u_0$}
                child{ node [arn_r] {$u_1$}
                    child{ node [arn_r] {$u_2$}}
                    child{ node [arn_r] {$x$}}
                }
                child{node [arn_r] {$v_0$}
                    child{ node [arn_r] {$v_1$}}
                    child{ node [arn_r] {$w_0$}
                        child{ node [arn_n] {$w_1$} node[below, yshift=-8]{$c$}}
                        child{ node [arn_n] {$y$} node[below, yshift=-8]{$e$}}
                    }
                };
            \end{scope}
        \end{tikzpicture}

        \begin{tikzpicture}
            \draw (0,0) -- (14,0);
        \end{tikzpicture}

        \vspace{0.5cm}
        
        \begin{tikzpicture}[scale=0.7, ->,level/.style={sibling distance = 3cm/#1, level distance = 1cm}]
            \begin{scope}[yshift=3.75cm, xshift=3cm, level/.style={sibling distance = 5cm/#1,
                level distance = 1.5cm}]
            \node [arn_r] {$u_0$}
                child{ node [arn_r] {$u_1$}
                    child{ node [arn_r] {$u_2$} node[below, yshift=-8]{$a$}}
                    child{ node [arn_r] {$x$} node[below, yshift=-5]{$d$}}
                    node[below, yshift=-8]{$a$}}
                child{node [arn_r] {$v_0$}
                    child{ node [arn_r] {$v_1$}
                        child{ node [arn_r] {$v_2$} node[below, yshift=-5]{$b$}}
                        child{ node [arn_r] {$z$} node[below, yshift=-5]{$f$}}
                        node[below, yshift=-5]{$b$}
                    }
                    child{ node [arn_r] {$w_0$}
                        child{ node [arn_r] {$w_1$} node[below, yshift=-8]{$c$}}
                        child{ node [arn_r] {$y$} node[below, yshift=-8]{$e$}}
                        node[below, yshift=-8]{$c$}
                    }
                    node[below, yshift=-5]{$b$}
                } node[below, yshift=-8]{$a$};
        \end{scope}
        \end{tikzpicture}
    \end{center}
    \caption{
        In this figure, a ball tree is computed for a given permutation and predecessor pairing.
        The permutation $(a,b,c,d,e,f)$ is depicted with arrows representing a predecessor mapping at the top.
        The tree is constructed incrementally in the middle.
        Each new point creates two new nodes.
        The centers of newly inserted nodes are show below the nodes.
        The completed tree is shown at the bottom.
        The center of each node is depicted below it.
        There may be multiple nodes with the same center.
    }
    \label{fig:greedytreeinsert}
\end{figure*}
% flatex input end: [../figs/greedytreeinsert_doublecol.tex]

% \end{proof}

Once the tree $G$ is constructed for set $A$, it is traversed using a max heap that stores the nodes using the radius as the key.
The heap $H$ is initialized with the root node.
The traversal continues while there is a node in $H$ with non-zero radius.
In each iteration, the node with largest radius is removed from $H$ and its children are added to $H$.
We call this the \emph{radius-order traversal} of $G$.
See Figure~\ref{fig:greedy_tree_traverse} for an example of this traversal.
The following properties hold at every iteration of the radius-order traversal of $G$:
\begin{enumerate}
    \item \textbf{Covering}: For every point $a \in A$, there exists a unique node $x \in H$ such that $a \in \points(x)$.
    \item \textbf{Packing}:
    For every node in $H$, the radius of its parent is at least the radius of the node at the top of $H$.
    The nodes in $H$ are pairwise independent.
    So, by Theorem~\ref{thm:structure}, the nodes in $H$ are packed.
\end{enumerate}

% flatex input: [../figs/greedytreetraverse.tex]
\tikzset{
  treenode/.style = {align=center, inner sep=0pt, text centered,
    font=\sffamily},
  arn_n/.style = {treenode, circle, white, font=\sffamily\bfseries, draw=black,
    fill=blue!50, text width=1.25em},
  arn_r/.style = {treenode, circle, black, draw=blue!50,
    text width=1.25em, very thick}
}

\begin{figure}
    \begin{center}    
        \begin{tikzpicture}[>={stealth[scale=3]}, scale=0.8]
            \draw[thin] (0, 0) rectangle (5, 5);
            \filldraw[blue!30] (2.3, 2.5) circle (1.97);

            \draw (2.3, 2.5) circle (0.05);

            \draw[gray, ->] (3.6, 1) -- (2.35, 2.45);
            \draw[gray, ->] (3.9, 2) -- (3.63, 1.03);
            \draw[gray, ->] (1.8, 3.5) -- (2.25, 2.55);
            \draw[gray, ->] (3.4, 2.25) -- (3.85, 2);
            \draw[gray, ->] (2.9, 1.3) -- (3.55, 1);

            \filldraw (2.3, 2.5) circle (0.02) node[left]{$a$};
            \filldraw (3.6, 1) circle (0.02) node[right]{$b$};
            \filldraw (3.9, 2) circle (0.02) node[right]{$c$};
            \filldraw (1.8, 3.5) circle (0.02) node[right]{$d$};
            \filldraw (3.4, 2.25) circle (0.02) node[left]{$e$};
            \filldraw (2.9, 1.3) circle (0.02) node[left]{$f$};
        \end{tikzpicture}
        \hspace{1cm}
        \begin{tikzpicture}[scale=0.7, ->,level/.style={sibling distance = 3cm/#1, level distance = 1cm}]
            \begin{scope}[yshift=3.75cm, xshift=3cm, level/.style={sibling distance = 5cm/#1,
              level distance = 1.5cm}]
            \node [arn_n] {$u_0$}
                child{ node [arn_r] {$u_1$}
                    child{ node [arn_r] {$u_2$} node[below, yshift=-8]{$a$}}
                    child{ node [arn_r] {$x$} node[below, yshift=-5]{$d$}}
                    node[below, yshift=-8]{$a$}
                }
                child{node [arn_r] {$v_0$}
                    child{ node [arn_r] {$v_1$}
                        child{ node [arn_r] {$v_2$} node[below, yshift=-5]{$b$}}
                        child{ node [arn_r] {$z$} node[below, yshift=-5]{$f$}}
                    node[below, yshift=-8]{$b$}
                    }
                    child{ node [arn_r] {$w_0$}
                        child{ node [arn_r] {$w_1$} node[below, yshift=-8]{$c$}}
                        child{ node [arn_r] {$y$} node[below, yshift=-8]{$e$}}
                    node[below, yshift=-8]{$c$}
                    }
                   node[below, yshift=-8]{$b$}
                }
            node[below, yshift=-8]{$a$};
            \end{scope}
        \end{tikzpicture}

    \begin{tikzpicture}
        \draw (0,0) -- (14,0);
    \end{tikzpicture}

    \vspace{0.4cm}

        \begin{tikzpicture}[>={stealth[scale=3]}, scale=0.8]
            \filldraw[blue!30] (2.3, 2.5) circle (1.12);
            \filldraw[blue!30] (4.6, 0) arc (-45 : 225 : 1.33);

            \draw[gray, ->] (3.9, 2) -- (3.63, 1.03);
            \draw[gray, ->] (1.8, 3.5) -- (2.25, 2.55);
            \draw[gray, ->] (3.4, 2.25) -- (3.85, 2);
            \draw[gray, ->] (2.9, 1.3) -- (3.55, 1);

            \filldraw (2.3, 2.5) circle (0.02) node[left]{$a$};
            \filldraw (3.6, 1) circle (0.02) node[right]{$b$};
            \filldraw (3.9, 2) circle (0.02) node[right]{$c$};
            \filldraw (1.8, 3.5) circle (0.02) node[right]{$d$};
            \filldraw (3.4, 2.25) circle (0.02) node[left]{$e$};
            \filldraw (2.9, 1.3) circle (0.02) node[left]{$f$};

            \draw (2.3, 2.5) circle (0.05);
            \draw (3.6, 1) circle (0.05);

            \draw[thin] (0, 0) rectangle (5, 5);
        \end{tikzpicture}
        \hspace{1cm}
        \begin{tikzpicture}[scale=0.7, ->,level/.style={sibling distance = 3cm/#1, level distance = 1cm}]
            \begin{scope}[yshift=3.75cm, xshift=3cm, level/.style={sibling distance = 5cm/#1,
              level distance = 1.5cm}]
            \node [arn_r] {$u_0$}
                child{ node [arn_n] {$u_1$}
                    child{ node [arn_r] {$u_2$}}
                    child{ node [arn_r] {$x$}}
                }
                child{node [arn_n] {$v_0$}
                    child{ node [arn_r] {$v_1$}
                        child{ node [arn_r] {$v_2$}}
                        child{ node [arn_r] {$z$}}
                    }
                    child{ node [arn_r] {$w_0$}
                        child{ node [arn_r] {$w_1$}}
                        child{ node [arn_r] {$y$}}
                    }
                };
            \end{scope}
        \end{tikzpicture}

        \begin{tikzpicture}
            \draw (0,0) -- (14,0);
        \end{tikzpicture}

        \vspace{0.4cm}
    
        \begin{tikzpicture}[>={stealth[scale=3]}, scale=0.8]
            \draw[thin] (0, 0) rectangle (5, 5);

            \filldraw[blue!30] (2.3, 2.5) circle (1.12);
            \filldraw[blue!30] (3.6, 1) circle (0.75);
            \filldraw[blue!30] (3.9, 2) circle (0.56);

            \draw[gray, ->] (1.8, 3.5) -- (2.25, 2.55);
            \draw[gray, ->] (3.4, 2.25) -- (3.85, 2);
            \draw[gray, ->] (2.9, 1.3) -- (3.55, 1);

            \filldraw (2.3, 2.5) circle (0.02) node[left]{$a$};
            \filldraw (3.6, 1) circle (0.02) node[right]{$b$};
            \filldraw (3.9, 2) circle (0.02) node[right]{$c$};
            \filldraw (1.8, 3.5) circle (0.02) node[right]{$d$};
            \filldraw (3.4, 2.25) circle (0.02) node[left]{$e$};
            \filldraw (2.9, 1.3) circle (0.02) node[left]{$f$};

            \draw (2.3, 2.5) circle (0.05);
            \draw (3.6, 1) circle (0.05);
            \draw (3.9, 2) circle (0.05);

        \end{tikzpicture}
        \hspace{1cm}
        \begin{tikzpicture}[scale=0.7, ->,level/.style={sibling distance = 3cm/#1, level distance = 1cm}]
            \begin{scope}[yshift=3.75cm, xshift=3cm, level/.style={sibling distance = 5cm/#1,
                level distance = 1.5cm}]
              \node [arn_r] {$u_0$}
                  child{ node [arn_n] {$u_1$}
                      child{ node [arn_r] {$u_2$}}
                      child{ node [arn_r] {$x$}}
                  }
                  child{node [arn_r] {$v_0$}
                      child{ node [arn_n] {$v_1$}
                          child{ node [arn_r] {$v_2$}}
                          child{ node [arn_r] {$z$}}
                      }
                      child{ node [arn_n] {$w_0$}
                          child{ node [arn_r] {$w_1$}}
                          child{ node [arn_r] {$y$}}
                      }
                  };
              \end{scope}
        \end{tikzpicture}

        \begin{tikzpicture}
            \draw (0,0) -- (14,0);
        \end{tikzpicture}
    
        \vspace{0.4cm}
    
        \begin{tikzpicture}[>={stealth[scale=3]}, scale=0.8]
            \draw[thin] (0, 0) rectangle (5, 5);

            \filldraw[blue!30] (3.6, 1) circle (0.75);
            \filldraw[blue!30] (3.9, 2) circle (0.56);

            \draw[gray, ->] (3.4, 2.25) -- (3.85, 2);
            \draw[gray, ->] (2.9, 1.3) -- (3.55, 1);

            \filldraw (2.3, 2.5) circle (0.02) node[left]{$a$};
            \filldraw (3.6, 1) circle (0.02) node[right]{$b$};
            \filldraw (3.9, 2) circle (0.02) node[right]{$c$};
            \filldraw (1.8, 3.5) circle (0.02) node[right]{$d$};
            \filldraw (3.4, 2.25) circle (0.02) node[left]{$e$};
            \filldraw (2.9, 1.3) circle (0.02) node[left]{$f$};

            \draw (2.3, 2.5) circle (0.05);
            \draw (3.6, 1) circle (0.05);
            \draw (3.9, 2) circle (0.05);
            \draw (1.8, 3.5) circle (0.05);
        \end{tikzpicture}
        \hspace{1cm}
        \begin{tikzpicture}[scale=0.7, ->,level/.style={sibling distance = 3cm/#1, level distance = 1cm}]
            \begin{scope}[yshift=3.75cm, xshift=3cm, level/.style={sibling distance = 5cm/#1,
                level distance = 1.5cm}]
              \node [arn_r] {$u_0$}
                  child{ node [arn_r] {$u_1$}
                      child{ node [arn_n] {$u_2$}}
                      child{ node [arn_n] {$x$}}
                  }
                  child{node [arn_r] {$v_0$}
                      child{ node [arn_n] {$v_1$}
                          child{ node [arn_r] {$v_2$}}
                          child{ node [arn_r] {$z$}}
                      }
                      child{ node [arn_n] {$w_0$}
                          child{ node [arn_r] {$w_1$}}
                          child{ node [arn_r] {$y$}}
                      }
                  };
              \end{scope}
        \end{tikzpicture}
    \end{center}
    \caption{
        This figure shows a greedy tree and the first four iterations of its radius-order traversal.
        The center of every node is shown below the node in the first figure.
        In this tree $\nodectr{u_0} = \nodectr{u_1} = \nodectr{u_2}$ but $\noderad{u_0} \ne \noderad{u_1} \ne \noderad{u_2}$.
        Also, $\points(u_0) = \{a,b,c,d,e,f\}$ while $\points(u_1) = \{a,d\}$ and $\points(v) = \{b,c,e,f\}$.
        The heap is initialized with $u_0$.
        In each iteration, the node with the largest radius is replaced by its children.
        The complete order of traversal is $u_0, v_0, u_1, v_1, w_0$.
    }
    \label{fig:greedy_tree_traverse}
\end{figure}
% flatex input end: [../figs/greedytreetraverse.tex]
 % \end{proof}

% flatex input end: [background_journal.tex]

%------------------------------ Text -------------------------------------

    \section{The Directed Hausdorff Distance}\label{sec:hd-algo}
    % flatex input: [hd-algo.tex]
In this section, we describe \algoHD, our directed Hausdorff distance approximation algorithm.
The main input to \algoHD is a pair of greedy trees $G_A$ and $G_B$ built on sets $A$ and $B$.
A simultaneous radius-order traversal of two greedy trees using a single heap can be done as described in the previous section.
The result is similar to a dual-tree traversal~\cite{curtin13tree,curtin15plug,gray00n-body,ram09linear}.
This traversal can be done beforehand and the output be stored in a list.
We assume we have access to the roots of the two greedy trees and the remaining nodes are stored in a single list sorted by non-increasing radii.
The preprocessing time does not exceed the tree construction time.
In Section~\ref{sec:mds} we give an application where this approach is useful.

The input also includes a parameter $\e> 0$ that determines the approximation factor.
The output is the $(1+\e)$-approximate directed Hausdorff distance from $A$ to $B$.

\subsection{The Setup}
The main data structure used is a bipartite graph $N$.
The two parts of $N$ are $N_A \subseteq G_A$ and $N_B \subseteq G_B$.
The set of neighbors of node $x$ in $N$ is denoted as $N(x)$.
This graph is called the \emph{viability graph} and it satisfies the following invariants:
\begin{itemize}
\item \textbf{Covering Invariant}: For every point in $A$ (and respectively, $B$), there exists a unique node in $N_A$ (respectively, $N_B$) that contains the point as a leaf.
\item \textbf{Edge Invariant}: For every point $a \in A$, if $\dist(a,B) = \dist(a,b)$ then there is an edge in $N$ between the nodes containing $a$ and $b$ as leaves.
\end{itemize}

In addition to the graph, \algoHD stores a \emph{local lower bound} on $\dist(a,B)$ for each point $a\in A$ that has been added to $N$.
Let node $x \in N_A$.
The local lower bound of point $\nodectr{x}$, denoted $\lb(x)$, is defined as,
\[
    \lb(x) := \max\left\{\min_{y \in N(x)}\ctrdist(x,y) - \noderad{y}, 0 \right\}.
\]
The largest of these lower bounds is stored as the \emph{global lower bound} $L$.
The global lower bound serves as an estimate of the directed Hausdorff distance.

% The main loop of \algoHD iterates over the input list of greedy tree nodes.
% We use the following two optimizations to speed up the algorithm:
% \begin{enumerate}
%     \item  \textbf{Pruning the Viability Graph}:
%     New edges are added to the viability graph over the course of the loop.
%     We prevent $N$ from becoming too large by pruning long edges using the triangle inequality.
%     % Let the vertex set of $N$ be $A' \sqcup B'$ at the start of some iteration.
%     An edge between $\node a$ and $\node b$ can be pruned without violating the Edge Invariant if there exists vertex $\node{b'} \in N$ with center $b' \in B$ such that 
%     \[
%         \dist(a,b') + r_a < \dist(a,b) - r_a -r_b.
%     \]
%     See Figure~\ref{fig:pruning}.
%     \item \textbf{Stopping Early}:
%     We break from the loop once the unprocessed nodes have radii too small to affect the lower bounds significantly.
%     Let $r$ denote the radius of the largest node not yet processed.
%     If $r \le \frac \e 2 L$, then it is safe to ignore the remaining nodes and terminate the loop.
% \end{enumerate}
% flatex input: [../figs/pruningcondition.tex]
\begin{figure}[h]
\begin{center}
\begin{tikzpicture}[thin,scale=0.8]

\def\green{black!50!green}
\draw (-0.5,-0.5) rectangle (7,5.5);
\foreach \x/\y/\r/\i in {4.25/0.5/0.5/2, 5.75/2.5/1/3}{
  \filldraw[gray!50] (\x,\y) circle (\r);
  \filldraw (\x,\y) circle (0.02) node[above,black] {$y_\i$};
};
\foreach \x/\y/\r/\i in {2.1/1/1.25/0, 2.5/3.5/0.5/1, 4.5/4.5/0.8/4}{
  \filldraw[blue!30] (\x,\y) circle (\r);
  \filldraw (\x,\y) circle (0.02) node[above,black] {$y_\i$};
};
\filldraw[\green!50] (0.75,3.5) circle (1);
\filldraw (0.75,3.5) circle (0.02) node[black, above, yshift=2] {$x$};
\draw[|-|] (0.75,3.5) -- ++(-1,0);
\draw[|-|] (0.75,3.5) -- (2.5,3.5);
\draw[|-|] (0.75,2.5) -- ++(0,-1.75);
\draw[|-|] (0.75,0.75) -- ++(0,-1);
\draw[dashed, shift=(-109:3.75)] (0.75,3.5) arc (-109:32:3.75);
\end{tikzpicture}
\end{center}
\caption{
  The the nodes $y_2$ and $y_3$ are too far away to contain the nearest neighbor of any point in $x$, so edges $(x,y_2)$ and $(x,y_3)$ can be pruned from the viability graph.
  The pruning condition respects the neighbor invariant and does not prune edge $(x,y_4)$.
}
\label{fig:pruning}
\end{figure}
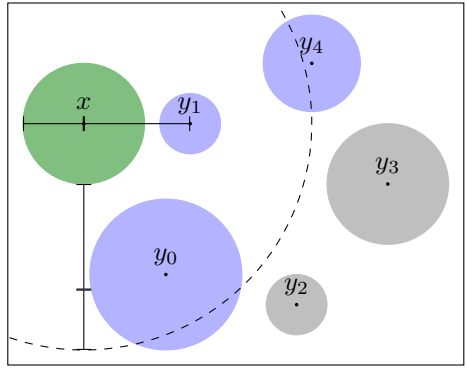

% flatex input end: [../figs/pruningcondition.tex]

% \end{enumerate}

\subsection{The Algorithm}

Now, we describe \algoHD.
Let $x_0$ and $y_0$ be the root nodes of $G_A$ and $G_B$ respectively.
Initialize $N$ to contain $x_0$ and $y_0$ as vertices connected by an edge.
Initialize $\lb(x_0)$ and $L$.
The main loop iterates over the input list of greedy tree nodes.

Let $z$ be the next node in the list.
If $\noderad{z} \le \frac{\e L}{2}$, then it is safe to ignore the remaining nodes and terminate the loop.
We call this inequality the \emph{stopping condition}.
Return $L$ as $(1+\e)$-approximate directed Hausdorff distance between $A$ and $B$.
Else, let $z_l$ and $z_r$ be the left and right children of $z$ respectively.
Add $z_l$ and $z_r$ as vertices in $N$.
Connect the children to the neighbors of $z$.
Remove $z$ from $N$.
If $z \in N_A$, let $S:=\{z_l, z_r\}$, else let $S:=N(z_l)$.

For each $x \in S$ iterate over the neighbors of $x$.
An edge $(x, y)$ can be removed if there exists $y' \in N(x)$ such that
\[
    \ctrdist(x,y) - \noderad{y} > \ctrdist(x, y') + 2\cdot\noderad{x}.
\]
Remove all such edges incident to $x$ (see Figure~\ref{fig:iteration}).
We refer to this step as \emph{pruning} node $x$.
Finally, update $\lb(x)$.
A lower bound update is shown in Figure~\ref{fig:locallowerbound}.
Repeat the main loop until the stopping condition is satisfied.

% If the radius $r$ of the next node in the list satisfies $r \le \frac \e 2 L$, then the algorithm returns $L$ as a $(1+\e)$-approximation of $\dist_h(A,B)$.
% Now we prove that $L$ is a $(1+\e)$-approximation of $\dist_h(A,B)$ when $r$ is sufficiently small.

% flatex input: [../figs/iteration.tex]
\begin{figure}
    \centering
    \begin{tikzpicture}[scale=0.75]
        \draw (-2.9, 5.65) rectangle (8.65, -20.65);
        \draw (-2.75, -3) rectangle (8.5,5.5);
        \def\r{0.07}
        \def\p{(3,2)}
        \def\q{(0,2)}
        \def\points{(-1.5,-1)/0.8/$y_1$, (1.5,-2)/0.5/$y_2$, (4,-2)/0.3/$y_3$, (7, -0.8)/1/$y_4$}
        \def\green{black!50!green!50}
        \def\blue{blue!30}
        \def\gray{gray!50}
        \def\darr{stealth-stealth}
        
        \filldraw[\green] \p circle (3);
        \filldraw \p circle (\r) node[above] {$x_0$};
        % \draw \p circle (2*\r);

        \filldraw \q circle (\r/2) node[above left] {$a_1$};
        % \filldraw \s circle (\r);

        \foreach \a/\rad/\label in \points
        {
            \filldraw[\blue] \a circle (\rad);
            \draw \p -- \a;
            \draw[fill=white] \a circle (\r) node[below] {\label};
        }

        \def\yshft{-8.75}
        \draw (-2.75, -3+\yshft) rectangle (8.5,5.5+\yshft);
        \def\r{0.07}
        \def\p{(3,2+\yshft)}
        \def\q{(0,2+\yshft)}
        \def\s{(0.5,1+\yshft)}
        \def\points{(-1.5,-1+\yshft)/0.8/$y_1$, (1.5,-2+\yshft)/0.5/$y_2$, (4,-2+\yshft)/0.3/$y_3$, (7, -0.8+\yshft)/1/$y_4$}
        \def\green{black!50!green!50}
        \def\blue{blue!30}
        \def\gray{gray!50}
        \def\darr{stealth-stealth}
        
        \filldraw[white] \p circle (3);
        \filldraw[\green] \p circle (1.5);
        \filldraw \p circle (\r) node[above] {$x_0'$};

        \filldraw[\green] \q circle (0.7);
        \filldraw \q circle (\r) node[above] {$x_1$};

        \foreach \a\rad\label in \points
        {
            \filldraw[\blue] \a circle (\rad);
            \draw \p -- \a;
            \draw \q -- \a;
            \draw[fill=white] \a circle (\r) node[below] {\label};
        }

        \def\yshft{-17.5}
        \draw (-2.75, -3+\yshft) rectangle (8.5,5.5+\yshft);
        \def\r{0.07}
        \def\p{(3,2+\yshft)}
        \def\q{(0,2+\yshft)}
        \def\s{(0.5,1+\yshft)}
        \def\points{(-1.5,-1+\yshft)/0.8/$y_1$, (1.5,-2+\yshft)/0.5/$y_2$, (4,-2+\yshft)/0.3/$y_3$, (7, -0.8+\yshft)/1/$y_4$}
        \def\green{black!50!green!50}
        \def\blue{blue!30}
        \def\gray{gray!50}
        \def\darr{stealth-stealth}
        
        \filldraw[white] \p circle (3);
        \filldraw[\green] \p circle (1.5);
        \filldraw \p circle (\r) node[above] {$x_0'$};
        % \draw \p circle (2*\r);

        \filldraw[\green] \q circle (0.7);
        \filldraw \q circle (\r) node[above] {$x_1$};
        % \draw \q circle (2*\r);
        % \filldraw \s circle (\r);

        \foreach \a\rad\label in \points
        {
            \filldraw[\blue] \a circle (\rad);
            \draw \p -- \a;
            \draw[fill=white] \a circle (\r) node [below] {\label};
            % \draw \a circle (2*\r);
        }
        \draw \q -- (-1.5,-1+\yshft);
        \draw[fill=white] (-1.5, -1+\yshft) circle (\r);
        \draw \q -- (1.5,-2+\yshft);
        \draw[fill=white] (1.5,-2+\yshft) circle (\r);
        \draw[dashed] \q -- (4,-2+\yshft);
        \draw[fill=white] (4,-2+\yshft) circle (\r);
        \draw[dashed] \q -- (7, -0.8+\yshft);
        \draw[fill=white] (7, -0.8+\yshft) circle (\r);
        
    \end{tikzpicture}
    \caption{This figure illustrates pruning in an iteration of \algoHD.
    At the top, the node ${x_0}$ is connected to nodes ${y_1}, {y_2}, {y_3}$, and ${y_4}$.
    When ${x_0}$ is split and replaced by its children (middle image), edges are added between the new node ${x_1}$ and the neighbors of its parent.
    Then, at the bottom, the edges that are too long are pruned using the pruning condition.
    In this case, $({x_1}, {y_3})$ and $({x_1}, {y_4})$ are pruned.}
    \label{fig:iteration}
\end{figure}
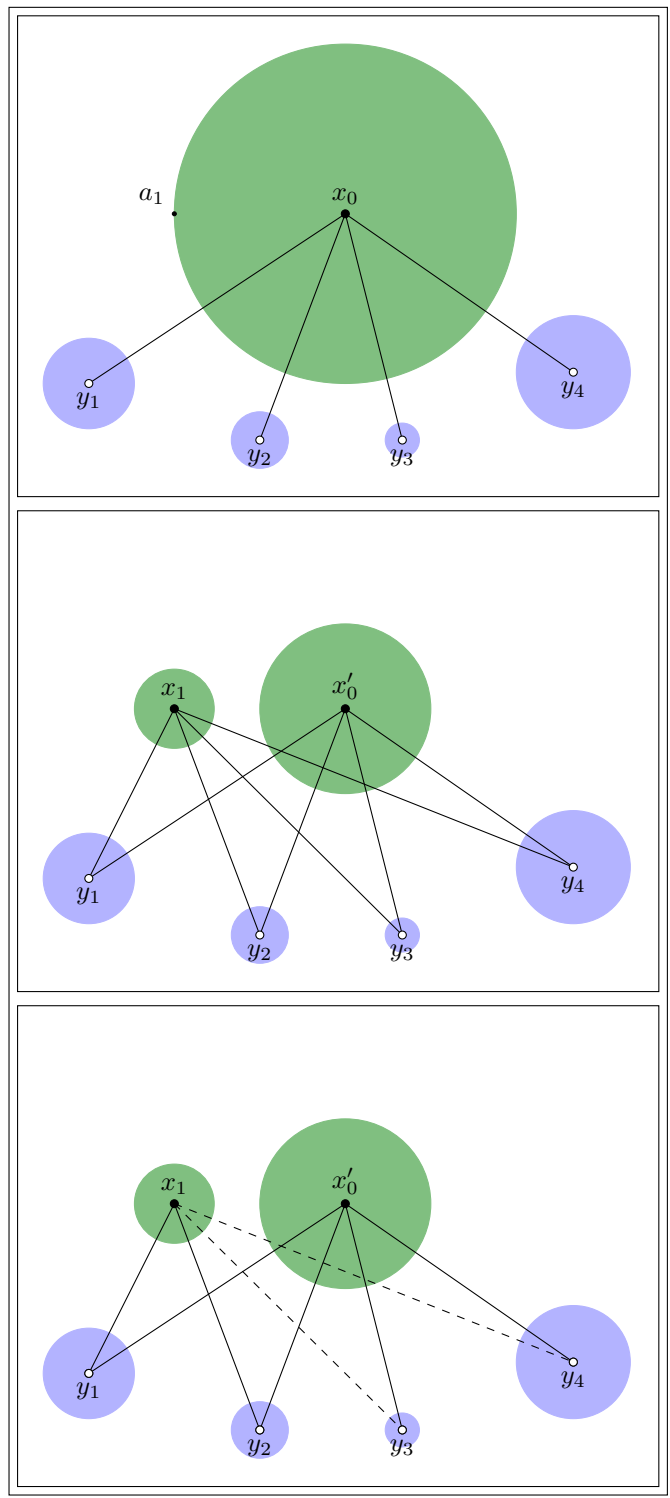
% flatex input end: [../figs/iteration.tex]
 
% Now we prove that $L$ is a $(1+\e)$-approximation of $\dist_h(A,B)$ when $r$ is sufficiently small.
% flatex input: [../figs/locallowerbound.tex]
\begin{figure}[h]
\begin{center}
\begin{tikzpicture}[thick,scale=0.9]
\draw[thin] (-3.25, -0.9) rectangle (12.15, 3.65);
\def\green{black!50!green}
\draw[thin] (-3.1,-0.75) rectangle (4,3.5);
\draw[\green] (1,1) edge node[black, below right,xshift=9,yshift=2] {\small$\lb(x)$}(-1.5,1);
\foreach \x/\y/\r/\i in {-1.5/1/1.5/0, 3/2.5/0.75/1, 2.7/0/0.5/2}{
  \draw[thick,dotted,\green] (1,1) -- (\x,\y);
  \filldraw[blue!30] (\x,\y) circle (\r);
  \filldraw (\x,\y) circle (0.02) node[above,black] {$y_\i$};
};
\draw[\green] (1,1) circle (0.02) node[black, right, xshift=1] {$x$};

\def\xshft{8}
\draw[thin] (-3.1+\xshft,-0.75) rectangle (4+\xshft,3.5);
\draw[\green] (1+\xshft,1) edge node[black, below right,xshift=-20,yshift=4] {\small$\lb(x)$}(2.7+\xshft,0);
\foreach \x/\y/\r/\i in {-1.5/1/0.5/$y_0'$, 3/2.5/0.75/$y_1$, 2.7/0/0.5/$y_2$, -1.25/2/0.4/$y_3$}{
  \draw[thick,dotted,\green] (1+\xshft,1) -- (\x+\xshft,\y);
  \filldraw[blue!30] (\x+\xshft,\y) circle (\r);
  \filldraw (\x+\xshft,\y) circle (0.02) node[above,black] {\i};
};
\draw[\green] (1+\xshft,1) circle (0.02) node[black, right, xshift=1] {$x$};
\end{tikzpicture}
\end{center}
\caption{
This figure depicts an update of $\lb(x)$ after replacing a node $y_0$ with its children.
}
\label{fig:locallowerbound}
\end{figure}

% flatex input end: [../figs/locallowerbound.tex]

% Now we prove that $L$ is a $(1+\e)$-approximation of $\dist_h(A,B)$ when $r$ is sufficiently small.
% flatex input: [../figs/nbrinvariant.tex]
\begin{figure}[h]
\begin{center}
    \begin{tikzpicture}[scale=0.8]
\def\green{black!50!green!50}
\def\blue{blue!30}
\def\gray{gray!50}
\def\r{0.03}
\def\darr{stealth-stealth}

\draw (-1.5,-0.2) rectangle (7,5.2);

\filldraw[\gray] (-0.6,2) circle (0.8);
\filldraw[\green] (4,4) circle (1);
\filldraw[\blue] (6,4) circle (0.65);

\filldraw (-0.6,2) circle (\r) node[above left] {$y'$};
\filldraw (0,2.2) circle (\r) node[above left] {$b$};
\filldraw (4,4) circle (\r) node[above left]{$x$};
\filldraw (3.5,3.5) circle (\r) node[above left]{$a$};
\filldraw (6,4) circle (\r) node[above left]{$y$};
\draw[\darr] (3,4) -- (4,4);
\draw[\darr] (6,4) -- (4,4);
\draw[\darr] (4,3) -- (4,1);
\draw[\darr] (4,0) -- (4,1);
%\draw[dashed] (4,4) circle (4);
\draw[dashed,shift=(-197:4)] (4,4) arc (-197:-42:4);

\end{tikzpicture}
\end{center}
\caption{
    Let $x \in N_A$.
    For every point $a$ in $\points(x)$, we have $\dist(a,B) \le \ctrdist(x,y) + \noderad{x}$ by the triangle inequality.
    If edge $(x, y')$ has been pruned, then no point $b \in \points(y')$ can be the nearest neighbor of all $a \in \points(x)$, because $\dist(a,\nodectr{y}) \le \dist(a,b)$ for any such $b$.
}
\label{fig:nbrinvariant}
\end{figure}

% flatex input end: [../figs/nbrinvariant.tex]

% Now we prove that $L$ is a $(1+\e)$-approximation of $\dist_h(A,B)$ when $r$ is sufficiently small.

% We stop once the unprocessed nodes have radii too small to significantly affect the lower bounds.
% \algoHD maintains $L$ as the greatest local lower bound as shown in Figure~\ref{fig:globallowerbound}.
% depicts how $L$ is updated when local lower bounds change.

\subsection{Analysis}

Figure~\ref{fig:nbrinvariant} illustrates how the pruning condition does not violate the Edge Invariant.
This is formalized in the following lemma:

\begin{lemma}\label{lem:pruning_correctness}
    % Let the vertex set of the viability graph $N$ be $A' \sqcup B'$ at the start of an iteration of \algoHD.
    For $a \in A$, let $b \in B$ such that $\dist(a,B) = \dist(a,b)$.
    Let $x \in N_A$ and $y \in N_B$ be such that $a \in \points(x)$ and $b \in \points(y)$ in some iteration of \algoHD.
    Then, there exists an edge between $x$ and $y$ in $N$.
\end{lemma}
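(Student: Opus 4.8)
The plan is to establish the \textbf{Edge Invariant} as a loop invariant of \algoHD\ and prove it by induction on the iterations of the main loop; the lemma is precisely the assertion that this invariant holds in an arbitrary iteration, so it suffices to verify it at initialization and show that it survives each of the two ways the algorithm modifies $N$: splitting a node and pruning edges. Throughout, fix $a \in A$ with a nearest neighbor $b \in B$, so $\dist(a,B) = \dist(a,b)$.

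For the base case, $N$ contains only the roots $x_0, y_0$ joined by an edge, and since $\points(x_0) = A$ and $\points(y_0) = B$, the point $a$ lies in $x_0$ and $b$ lies in $y_0$, which are adjacent. For the splitting step, suppose the invariant holds and $z$ is replaced by children $z_l, z_r$, each inheriting all edges to $N(z)$. Let $x \ni a$ and $y \ni b$ be the adjacent nodes guaranteed by the invariant. If $z \notin \{x,y\}$ the edge $(x,y)$ is untouched; if $z = x$ then $a$ lands in exactly one child, which inherits the edge to $y$, and symmetrically if $z = y$. In every case the required edge persists, so splitting preserves the invariant.

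The crux, and where I expect the real work, is the pruning step. I would argue by contradiction: suppose the edge $(x,y)$ with $a \in \points(x)$ and $b \in \points(y)$ is pruned because some $y' \in N(x)$ satisfies
\[
    \ctrdist(x,y) - \noderad{y} > \ctrdist(x,y') + 2\cdot\noderad{x}.
\]
Since pruning only deletes edges whose $A$-endpoint is the node $x \in N_A$, both $y$ and $y'$ lie in $N_B$, and the center $b' := \nodectr{y'}$ is a point of $B$ because centers of nodes in $G_B$ are drawn from $B$. Two triangle-inequality estimates finish the argument: from $a \in \points(x)$ and $b \in \points(y)$ we get
\[
    \dist(a,b) \ge \ctrdist(x,y) - \noderad{x} - \noderad{y},
\]
while from $a \in \points(x)$ we get $\dist(a,b') \le \ctrdist(x,y') + \noderad{x}$. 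Substituting the pruning inequality into the first bound yields
\[
    \dist(a,b) \ge \bigl(\ctrdist(x,y) - \noderad{y}\bigr) - \noderad{x} > \ctrdist(x,y') + \noderad{x} \ge \dist(a,b'),
\]
so $b' \in B$ is strictly closer to $a$ than $b$, contradicting $\dist(a,B) = \dist(a,b)$.

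Hence the critical edge is never pruned, and together with the base case and the splitting step this closes the induction. The only delicate point is the bookkeeping in the final chain of inequalities: I would want to confirm that the $2\cdot\noderad{x}$ slack built into the pruning rule is exactly what absorbs the two separate $\noderad{x}$ error terms arising from $a$ being off-center within $x$ (one in the lower bound on $\dist(a,b)$, one in the upper bound on $\dist(a,b')$). This is the geometry already sketched in Figure~\ref{fig:nbrinvariant}, so the estimates should go through cleanly.
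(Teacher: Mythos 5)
Your proof is correct and takes essentially the same route as the paper: the heart of both arguments is the identical contradiction, combining $\dist(a,b) \ge \ctrdist(x,y) - \noderad{x} - \noderad{y}$ with the pruning condition and $\dist(a,\nodectr{y'}) \le \ctrdist(x,y') + \noderad{x}$ to produce a point of $B$ strictly closer to $a$ than its nearest neighbor. The only difference is presentational: you make explicit the induction over iterations (initialization and node-splitting steps), which the paper leaves implicit by invoking the Covering Invariant and treating only the pruning case.
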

\begin{proof}
Given $a\in A$, let $b \in B$ such that $\dist(a,b) = \dist(a,B)$.
By the Covering Invariant, there exists $x \in N_A$ and $y \in N_B$ such that $a \in \points(x)$ and $b\in \points(y)$.
Then, by the triangle inequality, $\dist(a,b) \ge \ctrdist(x,y) - \noderad{x} - \noderad{y}$.
Suppose for the sake of contradiction, the edge $(x,y)$ in $N$ was pruned in the current iteration.
Then $(x,y)$ must satisfy the pruning condition, i.e., for some $y' \in N_B$ we have $\ctrdist(x,y) - \noderad{y} > \ctrdist(x,y') + 2\cdot\noderad{x}$.
It follows that $\dist(a,b) > \ctrdist(x,y') + \noderad{x}$.
Additionally, $\ctrdist(x, y') \ge \dist(a, \nodectr{y'}) - \noderad{x}$ by the triangle inequality.
Thus, we have $\dist(a,b) > \dist(a, \nodectr{y'})$.
This is a contradiction as $b$ is a nearest neighbor of $a$.
Therefore, there exists an edge between $x$ and $y$.
\end{proof}

\begin{lemma}\label{lem:l_plus_2r}
Let $r$ be the radius of the node to be processed in an iteration of \algoHD and let $L$ be the global lower bound.
Then, $L \le \dist_h(A,B) \le L + 2r$.
Moreover, if $r \le (\frac{\e }{2})L$, then $L$ is a $(1 + \e)$-approximation of $\dist_h(A, B)$.
\end{lemma}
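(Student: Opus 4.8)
The plan is to establish the two-sided bound $L \le \dist_h(A,B) \le L + 2r$ and then obtain the approximation guarantee by a one-line substitution. I would split the argument into a lower-bound direction and an upper-bound direction, each resting on the two invariants of the viability graph together with the fact that, in a radius-order traversal, the node about to be processed has the largest radius among all unprocessed (hence frontier) nodes; consequently every $x \in N_A$ and every $y \in N_B$ satisfies $\noderad{x} \le r$ and $\noderad{y} \le r$.

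For the lower bound I would first show that $\lb(x) \le \dist(\nodectr{x}, B)$ for each $x \in N_A$. Fix $x$, set $a = \nodectr{x}$, and let $b \in B$ realize $\dist(a,B)$. By Lemma~\ref{lem:pruning_correctness} the node $y \in N_B$ containing $b$ is a neighbor of $x$, so the triangle inequality gives $\ctrdist(x,y) \le \dist(a,b) + \noderad{y}$, that is $\ctrdist(x,y) - \noderad{y} \le \dist(a,B)$. Since this $y$ is one candidate in the minimum defining $\lb(x)$ and $\dist(a,B) \ge 0$, we get $\lb(x) \le \dist(a,B) \le \dist_h(A,B)$. Taking the maximum over $N_A$ yields $L \le \dist_h(A,B)$.

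For the upper bound I would pick the point $a^\ast \in A$ attaining $\dist_h(A,B) = \dist(a^\ast, B)$ and let $x \in N_A$ be the node (unique by the Covering Invariant) with $a^\ast \in \points(x)$. Let $y \in N(x)$ realize the minimum in the definition of $\lb(x)$; note $N(x) \neq \varnothing$ because the Edge Invariant always retains the edge to the nearest neighbor. Since $\nodectr{y} \in B$, chaining triangle inequalities gives $\dist(a^\ast, B) \le \dist(a^\ast, \nodectr{y}) \le \noderad{x} + \ctrdist(x,y)$. The delicate step is to bound $\ctrdist(x,y) \le \lb(x) + \noderad{y}$: when $\ctrdist(x,y) - \noderad{y} \ge 0$ this is exactly the definition of $\lb(x)$, and when it is negative the truncation forces $\lb(x) = 0$ while $\ctrdist(x,y) < \noderad{y}$, so the inequality survives. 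Combining with $\noderad{x}, \noderad{y} \le r$ and $\lb(x) \le L$ gives $\dist(a^\ast, B) \le \lb(x) + 2r \le L + 2r$.

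Finally, under the hypothesis $r \le \frac{\e}{2} L$ the upper bound reads $\dist_h(A,B) \le L + 2r \le (1+\e)L$, and together with $L \le \dist_h(A,B)$ this shows that $L$ is a $(1+\e)$-approximation. I expect the main obstacle to be the upper-bound direction: correctly justifying that the processing radius $r$ dominates every frontier radius (a global property of the radius-order traversal rather than of a single node) and handling the $\max\{\cdot,0\}$ truncation in $\lb(x)$ so that $\ctrdist(x,y) \le \lb(x) + \noderad{y}$ holds even in the degenerate case where the raw quantity is negative.
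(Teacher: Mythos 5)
Your proof is correct and takes essentially the same route as the paper's: every vertex of $N$ has radius at most $r$, and chaining triangle inequalities through node centers together with the definitions of $\lb$ and $L$ gives $L \le \dist_h(A,B) \le L + 2r$, with the $(1+\e)$-approximation following by substituting $r \le \frac{\e}{2}L$. The differences are only presentational: the paper packages the upper bound via the center sets $A' = \{\nodectr{x} \mid x \in N_A\}$ and $B' = \{\nodectr{y} \mid y \in N_B\}$ rather than chaining from the maximizing point $a^\ast$ directly, and it asserts the lower bound $L \le \dist_h(A,B)$ without the explicit appeal to Lemma~\ref{lem:pruning_correctness} that you spell out---a step your write-up actually treats more completely.
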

\begin{proof}
% Let the vertex set of the viability graph $N$ be $A' \sqcup B'$.
Let $A' := \{\nodectr{x} \mid x \in N_A\}$ and $B' := \{\nodectr{y} \mid y \in N_B\}$.
We first show that the distance from $A'$ to $B$ is at most $L+r$.
We know that $\dist_h(A', B) \le \dist_h(A', B')$ because for any $a \in A'$, we have $\dist(a,B) \le \dist(a,B')$.
Also, $\noderad{z} \le r$ for any vertex $z \in N$.
So,
\[
    \dist(a,B') \le \min_{y\in N_B} \{\dist(a,\nodectr{y}) - \noderad{y} + r\} \le \lb(x)+r,
\]
where $x \in N_A$ is the node with $\nodectr{x} = a$.
It follows that,
\[
    \dist_h(A',B) \le \max_{x\in N_A} \{\lb(x) + r\}= L + r.
\]
Furthermore, $\dist(a,B) \le \dist(a,A') + \dist_h(A', B)$, and we know that $\dist(a,A') \le r$.
    So, $\dist(a,B) \le L+2r$.
    Therefore $L \le \dist_h(A,B) \le L + 2r$.
    It follows that if $r \le \frac \e 2 L$ then $\dist_h(A,B) \le (1+\e)L$.
\end{proof}

% flatex input: [../figs/globallowerbound.tex]
\begin{figure}
\begin{center}
\begin{tikzpicture}[thick,scale=0.85]
\draw[thin] (-3.25,-0.9) rectangle (12.45,3.75);
\draw[thin] (-3.1,-0.75) rectangle (4.3,3.6);
\def\green{black!50!green}
\draw[\green] (3,2.5) edge node[black,above left, yshift=2] {$L$} (1.25,3.3);
\draw[\green,dotted] (1,1) edge (-1.5,1);
\draw[\green,dotted] (4,1.25) edge (3,2.5);
\foreach \x/\y/\r/\i in {-1.5/1/1.5/0, 3/2.5/0.75/1, 2.7/0/0.5/2}{
  \filldraw[blue!30] (\x,\y) circle (\r);
  \filldraw (\x,\y) circle (0.02) node[above,black] {$y_\i$};
};
\foreach \x/\y/\i in {1/1/0, 4/1.25/1, 1.25/3.3/2}{
    \filldraw[\green] (\x,\y) circle (0.02) node[black, below] {$x_\i$};
}
% \end{tikzpicture}
% \vspace{0.5cm}
% \begin{tikzpicture}[thick,scale=0.8]
\def\xshft{8}
\def\green{black!50!green}
\draw[thin] (-3.1+\xshft,-0.75) rectangle (4.3+\xshft,3.6);
\draw[\green] (2.7+\xshft,0) edge node[black,above, yshift=0] {$L$} (1+\xshft,1);
\draw[\green,dotted] (3+\xshft,2.5) edge (1.25+\xshft,3.3);
%\draw[\green,dotted] (1,1) edge (-1.5,1);
\draw[\green,dotted] (4+\xshft,1.25) edge (3+\xshft,2.5);
%\foreach \x/\y/\r/\i in {-1.5/1/1.5/0, 3/2.5/0.75/1, 2.75/0/0.5/2}{
\foreach \x/\y/\r/\i in {-1.5/1/0.5/$y_0'$, 3/2.5/0.75/$y_1$, 2.7/0/0.5/$y_2$, -1.25/2/0.4/$y_3$}{
  \filldraw[blue!30] (\x+\xshft,\y) circle (\r);
  \filldraw (\x+\xshft,\y) circle (0.02) node[above,black] {\i};
};
\foreach \x/\y/\i in {1/1/0, 4/1.25/1, 1.25/3.3/2}{
    \filldraw[\green] (\x+\xshft,\y) circle (0.02) node[black, below] {$x_\i$};
}
\end{tikzpicture}
\end{center}
\caption{%The global lower bound $L$ is the maximum of the local lower bounds. $L$ is a lower bound of the directed Hausdorff distance.
This figure depicts an update of $L$ after replacing the node $y_0$ with its children. The directed Hausdorff distance, $\dist_h(A,B) \ge L$.
}
\label{fig:globallowerbound}
\end{figure}

% flatex input end: [../figs/globallowerbound.tex]

% Let the vertex set of the viability graph $N$ be $A' \sqcup B'$.

% \subsection{Running Time Analysis}

At all times, the degree of every vertex in $N$ is bounded by a constant.
We will show that this invariant is guaranteed by the pruning algorithm, the early stopping condition, and a packing bound.
It is the critical fact in the following analysis of the \algoHD running time.
% The critical fact in the analysis of the running time of \algoHD is that the degree of every vertex in $N$ is bounded by a constant at all times.
% This is achieved by the pruning algorithm and by stopping early with an approximation rather than running the algorithm to the end.
% It leads to the following theorem.

\begin{theorem}
    \label{thm:hd_correct}
    Given two greedy trees for sets $A$ and $B$ of total cardinality $n$, $\algoHD$ computes a $(1+\e)$-approximation of $\dist_h(A,B)$ in $\left (2+\frac{1}{\e}\right )^{O(d)}n$ time.
\end{theorem}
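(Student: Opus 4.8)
The plan is to prove two things: that the value $L$ returned when the stopping condition fires is a $(1+\e)$-approximation of $\dist_h(A,B)$, and that \algoHD runs in $\left(2+\frac1\e\right)^{O(d)}n$ time. I expect the running time analysis---specifically, a uniform bound on the degree of every vertex of the viability graph $N$---to be the main obstacle.

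For correctness, I would first argue that \algoHD maintains the Covering and Edge Invariants throughout its execution. The Covering Invariant is preserved because replacing a node $z$ by its two children partitions $\points(z)$, so each point of $A$ (resp.\ $B$) still lies in exactly one node of $N_A$ (resp.\ $N_B$); and Lemma~\ref{lem:pruning_correctness} shows that pruning never deletes an edge required by the Edge Invariant. Granting the invariants, Lemma~\ref{lem:l_plus_2r} gives $L \le \dist_h(A,B) \le L + 2r$ at every iteration, where $r$ is the radius of the node about to be processed. Since the stopping condition is exactly $r \le \frac\e2 L$, Lemma~\ref{lem:l_plus_2r} immediately yields $\dist_h(A,B) \le (1+\e)L$ upon termination. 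I would also note that the loop must terminate: radii are processed in non-increasing order and shrink to $0$ at the leaves while $L$ converges up to $\dist_h(A,B)$, so for $\dist_h(A,B) > 0$ the stopping inequality eventually holds.

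For the running time, the two greedy trees contain $O(n)$ nodes in total, and each iteration removes one node from the pre-sorted input list, so there are $O(n)$ iterations. The work in an iteration is dominated by copying edges to the new children and pruning, both costing $O(1)$ per incident edge; maintaining $L$ as a running maximum of the local bounds $\lb(x)$ (each of which never exceeds $\dist_h(A,B)$ when computed) avoids a logarithmic factor. Hence the total time is $O(n)$ times the maximum degree of a vertex in $N$, and the crux is to bound this degree by $\left(2+\frac1\e\right)^{O(d)}$ via a packing argument. Fix a node $x$ and its neighbors. Because nodes are processed largest-radius-first, every node in $N$ has radius at most $r$. After $x$ is pruned, each surviving neighbor $y$ satisfies $\ctrdist(x,y) - \noderad{y} \le m + 2\noderad{x}$, where $m := \min_{y' \in N(x)} \ctrdist(x,y')$; from the definition of $\lb$ I would show $m \le \lb(x) + r \le L + r$, and since the algorithm has not stopped, $r > \frac\e2 L$, i.e.\ $L < \frac{2r}\e$. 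Combining these,
\[
    \ctrdist(x,y) \;\le\; m + 2\noderad{x} + \noderad{y} \;\le\; \left(\tfrac2\e + 4\right) r ,
\]
so all neighbor centers lie in a ball of radius $O(r/\e)$ about $\nodectr{x}$.

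Finally, I would invoke the packing structure. The neighbors of $x$ lie on one side of the bipartition and are therefore pairwise independent; by the radius-order Packing property every node of $N$ has a parent of radius at least $r$, so Theorem~\ref{thm:structure}(2) makes their centers $\frac{(\alpha-1)r}{\alpha}$-packed. Applying Lemma~\ref{lem:std_packing} with covering radius $R = \left(\frac2\e+4\right)r$ and packing radius $\frac{(\alpha-1)r}{\alpha}$ bounds the degree by $\left(\frac{8\alpha}{\alpha-1}\left(2+\frac1\e\right)\right)^d$, which for fixed $\alpha>1$ simplifies to $\left(2+\frac1\e\right)^{O(d)}$; the identical computation bounds the degree of a node in $N_B$. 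Multiplying the per-iteration cost by the $O(n)$ iterations yields the claimed $\left(2+\frac1\e\right)^{O(d)}n$ bound. The main difficulty, as anticipated, is identifying the correct radius scale $r$ and chaining the pruning inequality with $L < 2r/\e$ to confine the neighbor centers within a ball whose radius is a constant multiple of the packing radius.
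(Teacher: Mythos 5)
Your proposal is correct and follows essentially the same route as the paper's proof: correctness via Lemmas~\ref{lem:pruning_correctness} and~\ref{lem:l_plus_2r}, and the running time via a degree bound on the viability graph obtained by combining the pruning condition (edges have length at most $L+4r$), the pre-termination inequality $L < \frac{2r}{\e}$, the packing of centers from Theorem~\ref{thm:structure}, and Lemma~\ref{lem:std_packing}. Your write-up is somewhat more explicit than the paper's (e.g.\ deriving $m \le \lb(x)+r \le L+r$ and arguing termination), but the decomposition and key estimates are identical.
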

\begin{proof}
    Lemmas~\ref{lem:pruning_correctness} and~\ref{lem:l_plus_2r} imply the correctness of \algoHD.
    Now, we bound the running time.
    Let $G_A$ and $G_B$ be $\alpha$-approximate greedy trees.
    In order to bound the degrees of the viability graph $N$, we first establish that the points associated with the neighbors of a vertex in $N$ are packed.
    Let $r$ be the radius of the next node in the input list.
    By construction, any node $z \in N$ is the left or right child of a greedy tree node with radius at least $r$.
    Then by Theorem~\ref{thm:structure}, the centers of nodes in $N$ are $\frac{(\alpha-1)r}{\alpha}$-packed.
    The pruning condition implies that the distance from $\nodectr{z}$ to any of its neighbors is at most $L + 4r$.
    Thus, by Lemma~\ref{lem:std_packing},
    \[
        |N(z)|
            \le \Big(\frac{2\alpha(L+4r)}{(\alpha-1)r}\Big)^d
            \le \Big(\frac{16\alpha(1+\e)}{(\alpha-1)\e}\Big)^d,
    \]
    for all $r \ge (\frac{\e}{2})L$.
    % Similarly, if $G_A$ is an $\alpha$-approximate greedy tree then the same bound holds for the neighbors of nodes in $B$.
    Therefore, the number of edges incident to any given node in $N$ is $\left (2+\frac{1}{\e}\right )^{O(d)}$.
    So we spend $\left (2+\frac{1}{\e}\right )^{O(d)}$ time for each iteration of the algorithm.
    This gives a running time of $\left (2+\frac{1}{\e}\right )^{O(d)}n$.
\end{proof}

% flatex input end: [hd-algo.tex]

%------------------------------ Text -------------------------------------

    \section{The Partial Directed Hausdorff Distance}\label{sec:kpartial}
    % flatex input: [kpartial_journal_2.tex]
Let $A$ and $B$ be subsets of a metric space $(X,\dist)$.
Let $A^{(k)}$ denote all subsets of $A$ with $k$ elements removed.
That is,
%\[
$A^{(k)} := \{S \subseteq A : |A\backslash S|=k\}$.
%\]
The \emph{$\kth$-partial} directed Hausdorff distance is
%\[
    $\khdist{k}(A,B) := \min_{S\in A^{(k)}} \dist_h(S,B)$.
%\]
Equivalently, $\khdist{k}(A,B)$ is the $(k+1)^\textrm{st}$ largest distance $\dist(a,B)$ over all $a\in A$.
In particular, $\dist_h^{(0)} = \dist_h$, as shown in Figure~\ref{fig:partialhd}.

% flatex input: [../figs/partialhd.tex]
\begin{figure*}[h]
\begin{center}
\def\green{black!50!green!50}
\def\blue{blue!30}
\def\white{white!30}
\def\r{0.05}
\begin{tikzpicture}[scale=0.9]
\def\hd{1}
\draw (-1.25,-2.25) rectangle (13.9,2.25);
\draw (-1.1,-2.1) rectangle (3.75,2.1);
\foreach \x/\y in {0/0, 1/0.5, 0.5/-0.5, 1.75/-1, 2.5/1}{
  \draw[thick] (\x,\y) circle (\hd);
}
\foreach \x/\y in {0/0, 1/0.5, 0.5/-0.5, 1.75/-1, 2.5/1}{
  \filldraw[\blue] (\x,\y) circle (\hd);
}
\foreach \x/\y in {0/0, 1/0.5, 0.5/-0.5, 1.75/-1, 2.5/1}{
  \draw[fill=white] (\x,\y) circle (\r);
}
\foreach \x/\y in {0/0.5, 1.25/0.75, 1.5/-1.5, 1.5/-1, 3.37/1.5}{
  \draw[fill=black] (\x,\y) circle (\r);
}
% \end{tikzpicture}
% \hspace{0.2cm}
%     \begin{tikzpicture}[scale=0.9]
\def\hd{0.56}
\def\xshft{5}
\draw (-1.1+\xshft,-2.1) rectangle (3.75+\xshft,2.1);
\foreach \x/\y in {0/0, 1/0.5, 0.5/-0.5, 1.75/-1, 2.5/1}{
  \draw[thick] (\x+\xshft,\y) circle (\hd);
}
\foreach \x/\y in {0/0, 1/0.5, 0.5/-0.5, 1.75/-1, 2.5/1}{
  \filldraw[\blue] (\x+\xshft,\y) circle (\hd);
}
\foreach \x/\y in {0/0, 1/0.5, 0.5/-0.5, 1.75/-1, 2.5/1}{
  \draw[fill=white] (\x+\xshft,\y) circle (\r);
}
\foreach \x/\y in {0/0.5, 1.25/0.75, 1.5/-1.5, 1.5/-1, 3.37/1.5}{
  \draw[fill=black] (\x+\xshft,\y) circle (\r);
}
% \end{tikzpicture}
% \hspace{0.2cm}
%     \begin{tikzpicture}[scale=0.9]
\def\hd{0.25}
\def\xshft{10}
\draw (-1.1+\xshft,-2.1) rectangle (3.75+\xshft,2.1);
\foreach \x/\y in {0/0, 1/0.5, 0.5/-0.5, 1.75/-1, 2.5/1}{
  \draw[thick] (\x+\xshft,\y) circle (\hd);
}
\foreach \x/\y in {0/0, 1/0.5, 0.5/-0.5, 1.75/-1, 2.5/1}{
  \filldraw[\blue] (\x+\xshft,\y) circle (\hd);
}
\foreach \x/\y in {0/0, 1/0.5, 0.5/-0.5, 1.75/-1, 2.5/1}{
  \draw[fill=white] (\x+\xshft,\y) circle (\r);
}
\foreach \x/\y in {0/0.5, 1.25/0.75, 1.5/-1.5, 1.5/-1, 3.37/1.5}{
  \draw[fill=black] (\x+\xshft,\y) circle (\r);
}
\end{tikzpicture}
\end{center}
\caption{Depicted above are the directed Hausdorff distance $\dist_h(A,B)$ (left),
the first partial Hausdorff distance $\dist_h^{(1)}(A,B)$ (center),
and the $4^\text{th}$-partial Hausdorff distance $\dist_h^{(4)}(A,B)$ (right).}
\label{fig:partialhd}
\end{figure*}
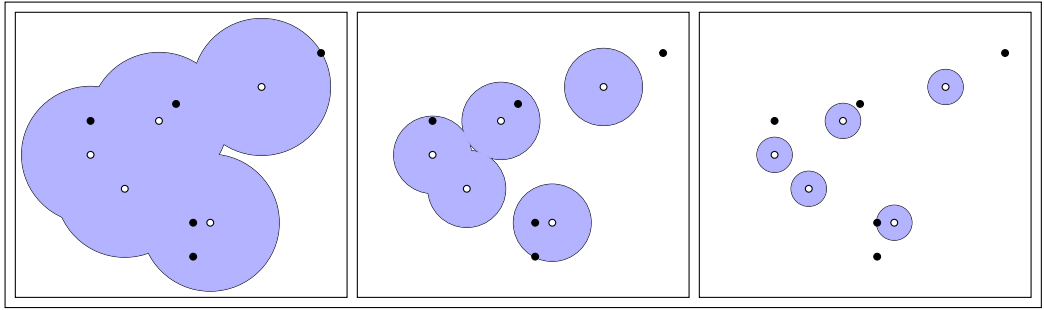

% flatex input end: [../figs/partialhd.tex]

%\]

% \emph{TODO: build intro to this problem based on previous algo:}
The stopping condition of \algoHD is satisfied when it has discovered a node $x$ with center $a$ such that $\dist(a,B)$ is approximately the largest among all points in $A$.
Instead of terminating the loop at this point, if we continue running the algorithm ignoring $a$, the algorithm discovers the next point in $A$ approximately farthest to $B$ the next time the stopping condition is satisfied.
So a simple modification of \algoHD gives us an algorithm to approximate the $\kth$-partial directed Hausdorff distance.

% \emph{TODO: line about running time}
The algorithm $\algokHD$ approximates $\khdist{k}(A,B)$ for all $k \le |A|$.
The input to \algokHD consists of two greedy trees and an approximation parameter preprocessed in the same way as \algoHD.
The output is a sequence $(\delta_0,\dots, \delta_{n-1})$ of distances such that $\delta_i \le \khdist{i}(A,B) \le (1+\e)\delta_i$.
The running time of the algorithm depends on a novel heap data structure.
We present the algorithm first and then discuss data structure choices before analyzing the running time.

\subsection{The Setup}
\algokHD maintains the same bipartite viability graph $N$ and local lower bounds as \algoHD.
% \algokHD also stores local lower bounds for nodes with centers in $A$.
Instead of storing just a global lower bound $L$, \algokHD stores all the nodes with centers in $A$ in a separate max heap $H$ with the local lower bounds as keys.
This heap is called the \emph{lower bound heap}.
Every local lower bound update also updates the priority of the node in $H$.
Additionally, there is a list to store the output.
    
\subsection{The Algorithm}\label{subsec:fractional_algo}
The initialization of \algokHD is the same as that of \algoHD.
The output list is initialized to be empty.

The main loop of this algorithm iterates over the input list in a manner similar to that of \algoHD.
The only change is to the stopping condition.
Instead of checking the stopping condition, check the following \emph{finishing condition} at the beginning of each iteration.
Let $r$ be the radius of the current node in the input list.
Let $x$ be the node at the top of $H$.
While $r \le \frac{\e}{2}\lb(x)$ \emph{finish} node $x$ as follows:
\begin{enumerate}
    \item Remove $x$ from $H$.
    \item Append $\lb(x)$ to the output list for each $a \in \points(x)$.
    \item Remove all edges incident to $x$ in $N$.
\end{enumerate}
Instead of terminating the loop when the finishing condition is satisfied, as in \algoHD, proceed and update the vertices in $N$ while pruning nodes to remove long edges.
The loop runs over the whole input list.

\paragraph*{Lower Bound Heap}\label{sec:lbh}

In \algokHD, the running time of each iteration is determined by the cost of updates to the lower bound heap.
Using a standard heap would require $O(\log n)$ time for basic operations.
Allowing for a small approximation, we can put the nodes in buckets, where the $m^{\text{th}}$ bucket of this heap contains nodes with radii in the interval $(\beta^m, \beta^{m+1}]$, for $1 < \beta \le 1 + \e$.
There are at most $O(\log_\beta \spread)$ buckets.
Using a standard heap on the buckets would lead to $O(\log\log_\beta \spread)$ time for basic operations.
This can be improved still further using a bucket queue~\cite{skiena98algorithms}, an array of the $O(\log_\beta \spread)$ buckets.
Most operations will take constant time except for remove max, which can require iterating through the buckets.
We will show that the buckets are removed in non-increasing order so the iteration visits each bucket at most once, incurring a total cost of $O(\log_\beta \spread)$.

% We want to reduce the cost of lower-bound heap operations by using the bucket queue.
A problem in using a bucket queue is that local lower bounds may increase over the course of the algorithm.
A na\"{i}ve approach is inefficient as it may end up revisiting buckets.
However, the following observation holds by the triangle inequality (see Figure~\ref{fig:inc_lower_bounds}).

% flatex input: [../figs/bucketqueue.tex]
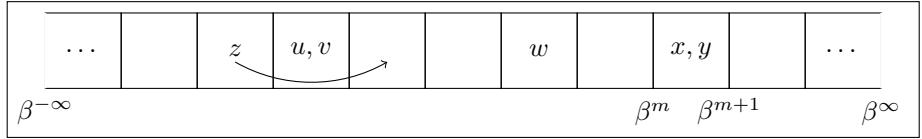
\begin{figure}
    \begin{center}
        \begin{tikzpicture}
            \draw (-0.5, -0.65) rectangle (11.5, 1.15); 
            \foreach \i in {0,...,10}{
                \draw (0+1*\i,0) rectangle (1+1*\i,1);
            }
            \draw[thick, white] (0, 0) -- (0, 1);
            \draw[thick, white] (11, 0) -- (11, 1);

            \draw (0.5, 0.5) node {$\dots$};
            \draw (10.5, 0.5) node {$\dots$};
            
            \draw (0,0) node[below] {$\beta^{-\infty}$};
            \draw (8, 0) node[yshift=-1.5, below] {$\beta^m$};
            \draw (9, 0) node[below] {$\beta^{m+1}$};
            \draw (11,0) node[yshift=-1.5, below] {$\beta^{\infty}$};

            \draw (2.5, 0.5) node {$z$};
            \draw (3.5, 0.5) node {$u, v$};
            \draw (6.5, 0.5) node {$w$};
            \draw (8.5, 0.5) node {$x, y$};

            \draw[<-] (4.5, 0.35) arc (-60:-120:2);
        \end{tikzpicture}
    \end{center}
    \caption{
        This figure shows how a $\beta$-bucket queue implements an approximate lower bound heap as a list of buckets.
        The number of non-empty buckets is $O(\log \spread)$.
        If node $x$ is in the $m^{th}$ bucket then $\beta^m < \lb(x) \le \beta^{m+1}$.
        Nodes in the same bucket are accessed in an arbitrary order.
        Local lower bounds may increase over the course of the algorithm.
        So a node $z$ may move to a bucket with a higher priority.
    }
    \label{fig:bucket_queue}
\end{figure}
% flatex input end: [../figs/bucketqueue.tex]

% We want to reduce the cost of lower-bound heap operations by using the bucket queue.
% flatex input: [../figs/increasinglowerbounds.tex]
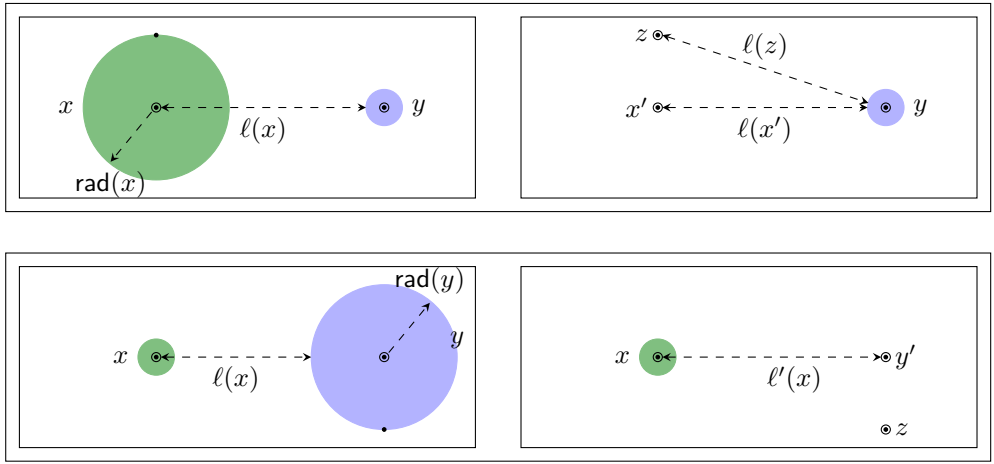
\begin{figure}
    \begin{center}
        \begin{tikzpicture}[>={stealth[scale=3]}, scale=1.2]
            \def\green{black!50!green!50}
            \def\blue{blue!30}
            \draw (-0.15, -0.15) rectangle (10.65, 2.15);
            \draw (0,0) rectangle (5,2);

            \filldraw[\green] (1.5, 1) circle (0.8);
            \filldraw[\blue] (4, 1) circle (0.2);

            % \filldraw (1.5, 1) circle (0.02) node[below]{$a$};
            % \filldraw (1.5, 1.8) circle (0.02) node[below]{$a'$};
            % \filldraw (4, 1) circle (0.02) node[below]{$b$};
            \filldraw (1.5, 1) circle (0.02);
            \filldraw (1.5, 1.8) circle (0.02);
            \filldraw (4, 1) circle (0.02);

            \draw (1.5, 1) circle (0.05);
            \draw (4, 1) circle (0.05);

            \draw[<->, dashed] (1.55, 1) -- (3.8, 1) node[midway, below]{$\lb(x)$};
            \draw[->, dashed] (1.45, 0.95) -- (1, 0.4) node[below]{$\noderad{x}$};
            \draw (0.7, 1) node[left]{$x$};
            \draw (4.2, 1) node[right]{$y$};
            
            \def \xshft{5.5}
            \draw (0+\xshft,0) rectangle (5+\xshft,2);

            \filldraw[\blue] (4+\xshft, 1) circle (0.2);

            \filldraw (1.5+\xshft, 1) circle (0.02) node[left]{$x'$};
            \filldraw (1.5+\xshft, 1.8) circle (0.02) node[left]{$z$};
            \filldraw (4+\xshft, 1) circle (0.02);

            \draw (1.5+\xshft, 1) circle (0.05);
            \draw (4+\xshft, 1) circle (0.05);
            \draw (1.5+\xshft, 1.8) circle (0.05);

            \draw (4.2+\xshft, 1) node[right]{$y$};

            \draw[<->, dashed] (1.55+\xshft, 1) -- (3.8+\xshft, 1) node[midway, below]{$\lb(x')$};
            \draw[<->, dashed] (1.55+\xshft, 1.8) -- (3.81+\xshft, 1.05) node[midway, above]{$\lb(z)$};
        \end{tikzpicture}

        \vspace{0.5cm}

        \begin{tikzpicture}[>={stealth[scale=3]}, scale=1.2]
            \def\green{black!50!green!50}
            \def\blue{blue!30}
            \draw (-0.15, -0.15) rectangle (10.65, 2.15);
            \draw (0,0) rectangle (5,2);

            \filldraw[\green] (1.5, 1) circle (0.2);
            \filldraw[\blue] (4, 1) circle (0.8);

            \filldraw (1.5, 1) circle (0.02);
            \filldraw (4, 1) circle (0.02);
            \filldraw (4, 0.2) circle (0.02);

            \draw (1.5, 1) circle (0.05);
            \draw (4, 1) circle (0.05);

            \draw (1.3, 1) node[left]{$x$};
            \draw (4.8, 1) node[above]{$y$};

            \draw[<->, dashed] (1.55, 1) -- (3.2, 1) node[below, midway] {$\lb(x)$};
            \draw[->, dashed] (4.05, 1.05) -- (4.5, 1.6) node[above]{$\noderad{y}$};
            
            \def \xshft{5.5}
            \draw (0+\xshft,0) rectangle (5+\xshft,2);

            \filldraw[\green] (1.5+\xshft, 1) circle (0.2);

            \filldraw (1.5+\xshft, 1) circle (0.02);
            \filldraw (4+\xshft, 1) circle (0.02) node[right]{$y'$};
            \filldraw (4+\xshft, 0.2) circle (0.02) node[right]{$z$};

            \draw (1.3+\xshft, 1) node[left]{$x$};

            \draw (1.5+\xshft, 1) circle (0.05);
            \draw (4+\xshft, 1) circle (0.05);
            \draw (4+\xshft, 0.2) circle (0.05);

            \draw[<->, dashed] (1.55+\xshft, 1) -- (3.95+\xshft, 1);
            \draw (3+\xshft, 1) node[below]{$\lb'(x)$};
        \end{tikzpicture}
    \end{center}
    \caption{
        This figure shows the two cases when local lower bounds may increase.
        At the top, a node $x$ centered at a point in $A$ is replaced by its children, the leaves $x'$ and $z$.
        In this case, $\lb(x') = \lb(x)$ and $\lb(z) \le \lb(x) + \noderad{x}$.
        At the bottom, a node $y$ centered at a point in $B$ is split into its children, the leaves $y'$ and $z$.
        Here, we denote the new lower bound of $\nodectr{x}$ by $\lb'(x)$.
        In this case, $\lb'(x) \le \lb(x) + \noderad{y}$.
    }
    \label{fig:inc_lower_bounds}
\end{figure}
% flatex input end: [../figs/increasinglowerbounds.tex]

% We want to reduce the cost of lower-bound heap operations by using the bucket queue.
\begin{observation}
    \label{obs:loc_lower_bounds_inc}
    Let $z$ be the next node to be processed with radius $r$.
    \begin{enumerate}
        \item Let $z \in G_B$.
        % For a neighbor $x \in N(z)$, let $\ell'(x)$ denote the largest local lower bound over the course of the algorithm among all descendents of $x$ centered at $\nodectr{x}$.
        For a neighbor $x \in N(z)$, the local lower bound of $x$ and all descendents of $x$ centered at $\nodectr{x}$ will never exceed $\ell(x) + r$  over the course of the algorithm.
        \item Let $z \in G_A$.
        Then, the local lower bound of any right child of $z$ will never exceed $\ell(z) + r$ at the end of the iteration.
        Moreover, the local lower bound of any descendent of $z$ will never exceed $\ell(z) + 2r$ over the course of the algorithm.
    \end{enumerate}
\end{observation}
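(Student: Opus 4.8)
The plan is to derive both parts from two facts that hold at every step of \algokHD, together with one bridging inequality. Let $z$ be the node about to be processed and $r=\noderad{z}$.

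First I would record \textbf{Fact (i)}: at all times every $x\in N_A$ satisfies $\lb(x)\le \dist(\nodectr{x},B)$. To see this, let $b^*$ be the nearest point of $B$ to $\nodectr{x}$ and let $y\in N_B$ be the node with $b^*\in\points(y)$ guaranteed by the Covering Invariant. Applying Lemma~\ref{lem:pruning_correctness} to the point $\nodectr{x}\in A$ shows the edge $(x,y)$ is present, so $\lb(x)\le\ctrdist(x,y)-\noderad{y}$; since $\dist(\nodectr{x},\nodectr{y})\le\dist(\nodectr{x},b^*)+\noderad{y}$ by the triangle inequality, the radius terms cancel and $\lb(x)\le\dist(\nodectr{x},B)$. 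Next I would record \textbf{Fact (ii)}: because the input list is sorted by non-increasing radius and consumed in order, every node currently in $N$ has radius at most $r$ (any larger-radius node would already have been processed). Combining these gives the \emph{bridging inequality} $\dist(\nodectr{x},B)\le\lb(x)+r$ for the node $x$ under examination: taking the minimizing neighbor $y^*\in N(x)\subseteq N_B$ of $\lb(x)$ gives $\dist(\nodectr{x},B)\le\ctrdist(x,y^*)=\lb(x)+\noderad{y^*}\le\lb(x)+r$, with the clipped case $\lb(x)=0$ handled by noting $\nodectr{x}$ then lies within distance $r$ of $B$.

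For Part 1, fix $z\in G_B$ and a neighbor $x\in N(z)$. The bridging inequality applied to $x$ gives $\dist(\nodectr{x},B)\le\lb(x)+r$, where $\lb(x)$ is the current value $\ell(x)$. Since $\dist(\nodectr{x},B)$ is a fixed quantity that, by Fact (i), upper bounds $\lb(x)$ at every later step, the bound $\lb(x)\le\ell(x)+r$ persists for the rest of the algorithm. For a descendant $x'$ of $x$ with $\nodectr{x'}=\nodectr{x}$ we have $\dist(\nodectr{x'},B)=\dist(\nodectr{x},B)\le\ell(x)+r$, and Fact (i) applied to $x'$ finishes this case identically. For Part 2, fix $z\in G_A$ and write $z_r$ for its right child, whose new center $c'\in\points(z)$ satisfies $\dist(c',\nodectr{z})\le r$. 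When $z$ is split, both children inherit all neighbors of $z$, and pruning cannot delete the edge realizing $\min_{y}[\ctrdist(z_r,y)-\noderad{y}]$, so $\lb(z_r)=\min_{y\in N(z)}[\ctrdist(z_r,y)-\noderad{y}]$. Evaluating at the minimizer $y^*$ of $\lb(z)$ and using $\ctrdist(z_r,y^*)\le\dist(c',\nodectr{z})+\ctrdist(z,y^*)\le r+\ctrdist(z,y^*)$ yields $\lb(z_r)\le\ell(z)+r$ at the end of the iteration. For the second claim, let $w$ be any descendant of $z$, so $\nodectr{w}\in\points(z)$ and $\dist(\nodectr{w},\nodectr{z})\le r$; the bridging inequality applied to $z$ gives $\dist(\nodectr{z},B)\le\ell(z)+r$, hence $\dist(\nodectr{w},B)\le\dist(\nodectr{w},\nodectr{z})+\dist(\nodectr{z},B)\le\ell(z)+2r$, and Fact (i) forces $\lb(w)\le\ell(z)+2r$ throughout.

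I expect the only real subtleties to be bookkeeping rather than the inequalities. For Part 2's sharp bound I must confirm that splitting $z$ and then pruning $z_r$ does not delete the edge attaining the minimum; this holds because the pruning test compares that edge's value against $\ctrdist(z_r,y')+2\noderad{z_r}$, which dominates it. I also need Fact (ii) stated precisely: any unprocessed node in $N$ must appear at or after $z$ in the radius-sorted list (else it would have been processed first), so its radius is at most $r$. Once these are pinned down, every remaining step is a single application of the triangle inequality, exactly as indicated in Figure~\ref{fig:inc_lower_bounds}.
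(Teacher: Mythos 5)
Your proof is correct, and it takes a genuinely different route from the paper's. The paper gives no explicit proof: the observation is asserted to follow from the triangle inequality, with Figure~\ref{fig:inc_lower_bounds} illustrating the intended local argument --- when the minimizing neighbor of a node $x\in N_A$ is split, the left child keeps the same center, so $\lb(x)$ grows by at most the radius of the split node, and symmetrically for splits on the $A$ side. That per-split accounting, taken literally, permits an increase at \emph{every} split of the current minimizer, so to reach the stated ``over the course of the algorithm'' bound one still needs either a telescoping argument (each increase is paid for by the shrinkage of the subtracted radius term) or an anchor to a fixed quantity. Your proof supplies exactly such an anchor: Fact~(i), that $\lb(x)\le\dist(\nodectr{x},B)$ at all times, which you correctly derive from the Edge Invariant (Lemma~\ref{lem:pruning_correctness}), combined with the bridging inequality $\dist(\nodectr{x},B)\le\lb(x)+r$, valid at the moment $z$ is about to be processed because every node then in $N$ has radius at most $r$. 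Since $\dist(\nodectr{x},B)$ is a fixed quantity, the temporal quantifier is immediate, both for $x$ and for its same-centered descendants, and the same trick yields the $\lb(z)+2r$ bound for arbitrary descendants in Part~2. What your approach buys is rigor on precisely the point the paper waves at; what it costs is the dependence on Lemma~\ref{lem:pruning_correctness}, which the figure's argument does not need. Your direct computation for the right child in Part~2 --- including the verification that pruning can never delete the edge realizing the minimum, so that $\lb(z_r)\le\lb(z)+r$ holds after pruning --- matches the paper's intent and fills in a detail it leaves unstated. Two minor points: the inequality $\lb(x)\le\ctrdist(x,y)-\noderad{y}$ in Fact~(i) should be read modulo the clamp at zero (the conclusion survives because distances are non-negative), and your Part~1 argument actually proves a stronger statement, namely that the bound holds for every node of $N_A$, not only for neighbors of $z$.
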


% Although node radii are non-increasing, the local lower bounds may increase (see Figure~\ref{fig:inc_lower_bounds}).
% The local lower bounds are not guaranteed to be non-increasing, so we describe the changes that allow us 
We need to modify the algorithm to avoid revisiting buckets.
We can use the following observation.

\begin{observation}
    \label{obs:same_bucket}
    Let the radius of the next node to be processed be $r$.
    Let $s$ be such that $r \le \frac{\beta-1}{2}\beta^s$.
    Let $x$ be a node such that $\lb(x) \le\beta^m$ for some $m \ge s$.
    Then, the local lower bound of $x$ or any child node of $x$ never exceeds $\beta^{m+1}$.
\end{observation}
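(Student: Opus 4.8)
The plan is to reduce the claim to a one-line arithmetic estimate, using Observation~\ref{obs:loc_lower_bounds_inc} to control how much a local lower bound can grow. The key structural fact I would invoke first is that the input list is traversed in order of non-increasing radius, so $r$ is an upper bound on the radius of every node processed from now on. This is exactly what lets me apply the growth envelopes of Observation~\ref{obs:loc_lower_bounds_inc}, which are stated ``over the course of the algorithm,'' with the single current value $r$.

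Next I would bound the largest value the local lower bound of $x$ can ever attain, writing $\lb(x)$ for its current value (so $\lb(x) \le \beta^m$ by hypothesis). The only events that raise $\lb(x)$ are refinements of a neighbor of $x$ in $G_B$; by part~1 of Observation~\ref{obs:loc_lower_bounds_inc} these keep it at most $\lb(x) + r$. A child of $x$ is created only when $x$ itself is split, which is an $A$-node event; by part~2 of Observation~\ref{obs:loc_lower_bounds_inc} every descendant of $x$ — in particular either child — has local lower bound at most $\lb(x) + 2r$ for the remainder of the execution. Taking the larger of the two envelopes, the local lower bound of $x$ or of any child of $x$ never exceeds $\lb(x) + 2r \le \beta^m + 2r$.

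It then remains to absorb the additive $2r$ into a single multiplicative factor of $\beta$. Substituting the assumed threshold $r \le \frac{\beta-1}{2}\beta^s$ gives $2r \le (\beta-1)\beta^s$, and since $m \ge s$ and $\beta > 1$ we have $\beta^s \le \beta^m$, hence $2r \le (\beta-1)\beta^m$. Chaining these estimates yields
\[
    \beta^m + 2r \le \beta^m + (\beta-1)\beta^m = \beta^{m+1},
\]
which is exactly the claimed bound and shows that a node whose lower bound is at most $\beta^m$ can never rise above bucket $m$.

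I do not expect a deep obstacle, since Observation~\ref{obs:loc_lower_bounds_inc} already does the geometric work and the remainder is bookkeeping. The one point to get right is that the governing increase is $2r$ rather than $r$, because the child case (part~2) dominates, and that the threshold $r \le \frac{\beta-1}{2}\beta^s$ was chosen precisely so that this $2r$ fits inside a single factor of $\beta$ once $m \ge s$. I would also state explicitly why this is the property the bucket queue needs: lower bounds may increase over time, but never by enough to push a node into a bucket strictly above the one determined by $\beta^m$, so a remove-max sweep never has to revisit a higher bucket.
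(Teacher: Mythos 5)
Your proof is correct and follows essentially the same route as the paper's: invoke Observation~\ref{obs:loc_lower_bounds_inc} to cap the growth of the local lower bound at $\lb(x)+2r$ (the child case dominating), then use $r \le \frac{\beta-1}{2}\beta^s$ and $m \ge s$ to absorb $2r \le (\beta-1)\beta^m$ into a single factor of $\beta$. The only difference is that you explicitly justify applying the observation over all future iterations via the non-increasing radius order, a point the paper leaves implicit.
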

\begin{proof}
    It follows from Observation~\ref{obs:loc_lower_bounds_inc} that over the course of the algorithm, the local lower bound of $x$ never exceeds $\lb(x)+r$.
    Moreover, for any child of $x$, the local lower bound never exceeds $\lb(x)+2r$.
    By our choice of $s$ and $m$,
    \[
        \lb(x)+2r \le \beta^m +(\beta-1)\beta^s \le \beta^{m+1}.
    \]
    Therefore, the local lower bounds never exceed $\beta^{m+1}$.
\end{proof}
% Consider the maximum occupied bucket in the lower bound heap to be $k = \log_\beta \lambda$ and let $s = \log_\beta (\frac{4r}{\e})$.

Now, we describe modified finishing and local lower bound update procedures to use the bucket queue as a lower bound heap.
Instead of finishing nodes, we now finish entire buckets.
Let $s = \left \lceil \log_\beta (\frac{2r\beta}{\beta-1})\right \rceil $.
% Rather than letting the maximum bucket be $k$, we let the maximum bucket be $s$.
Each time the radius decreases we update $s$ and traverse the array until we reach the new bucket $s$.
For any occupied bucket $j$ encountered before or coinciding with bucket $s$, append $\beta^j$ to the output for each $a \in \points(x)$ for each node $x$ in bucket $j$.
In other words, we finish all nodes with lower bounds greater than $\beta^s$.
When updating local lower bounds, if there is a node such that its new lower bound would exceed $\beta^s$ then finish it instead of updating its key.
% We determine that the algorithm should proceed while $k \le s$.

\subsection{Analysis}
The analysis of the \algokHD algorithm closely follows the analysis of the \algoHD algorithm.
As points are removed, the Hausdorff distance decreases, allowing the viability graph to maintain a constant degree throughout as in Theorem~\ref{thm:hd_correct}.
The main difference in the running time analysis is that one must include the cost of the heap operations.
% By using a bucket queue, the heap updates take constant time and the total cost of removals is $O(n +\log \spread)$.

First, we show that the output of \algokHD is correct.
\begin{lemma}\label{lem:algokhd_correct}
    Let $(\delta_0,\dots,\delta_{n-1})$ be the the sequence of distances returned by \algokHD.
    Then, $\delta_i\le \dist_h^{(i)}(A, B) \le (1+\e)\delta_i$ for all $i \in [n]$.
\end{lemma}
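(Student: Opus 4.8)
The plan is to establish the two inequalities $\delta_i \le \dist_h^{(i)}(A,B)$ and $\dist_h^{(i)}(A,B) \le (1+\e)\delta_i$ separately, by connecting the output value $\delta_i$ to the local lower bound $\lb(x)$ of the node whose finishing caused $\delta_i$ to be appended. Recall that $\khdist{i}(A,B)$ is exactly the $(i+1)^\text{st}$ largest value of $\dist(a,B)$ over $a\in A$, so the correctness claim amounts to showing that \algokHD{} outputs the points of $A$ in approximately non-increasing order of their distance to $B$, with each reported value approximating that point's true distance. The key structural fact I would extract first is a finishing-order invariant: because nodes are finished when the current radius $r$ drops to at most $\frac{\e}{2}\lb(x)$ (equivalently, when their local lower bound is large relative to $r$), and because $\lb(x)$ can only increase over the run by Observation~\ref{obs:loc_lower_bounds_inc}, the sequence of finished lower bounds $(\delta_0,\delta_1,\dots)$ is non-increasing up to the $\beta$-bucketing slack.

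\medskip\noindent\emph{Lower bound $\delta_i \le \khdist{i}(A,B)$.} First I would argue that when a node $x$ is finished and its value $\lb(x)$ appended once per leaf in $\points(x)$, this value is a genuine lower bound on $\dist(a,B)$ for every $a\in\points(x)$. This follows from the Edge Invariant together with Lemma~\ref{lem:pruning_correctness}: the true nearest neighbor of $a$ lies in some neighbor $y\in N(x)$, so $\dist(a,B)=\dist(a,b)\ge \ctrdist(x,y)-\noderad{x}-\noderad{y}$, and taking the minimum over neighbors relates this to $\lb(x)$ (after accounting for $\noderad{x}$, which is controlled by the stopping/finishing threshold). Since at most $i$ points of $A$ are finished strictly before the $(i+1)^\text{st}$ finished point, and the finished values are non-increasing, the value $\delta_i$ is at most the $(i+1)^\text{st}$ largest true distance, which is precisely $\khdist{i}(A,B)$.

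\medskip\noindent\emph{Upper bound $\khdist{i}(A,B)\le (1+\e)\delta_i$.} Here I would reuse the machinery of Lemma~\ref{lem:l_plus_2r}, applied not to a single global bound but at the moment each bucket is finished. When node $x$ (in bucket $j$, with $\beta^j < \lb(x) \le \beta^{j+1}$) is finished, the current radius satisfies $r\le \frac{\e}{2}\lb(x)$, so the argument of Lemma~\ref{lem:l_plus_2r} shows $\dist(a,B)\le \lb(x)+2r\le (1+\e)\lb(x)$ for each $a\in\points(x)$ among the points still ``active'' in $A$ at that time. The subtlety is that finishing removes points from consideration, so I would phrase this against the residual set $A\setminus(\text{finished points})$, arguing that the finished point's distance approximates the directed Hausdorff distance of the current residual set, which equals the appropriate order statistic. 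The $\beta$-bucket rounding contributes only a factor absorbed into $(1+\e)$ since $\beta\le 1+\e$, and Observation~\ref{obs:same_bucket} guarantees no point's lower bound crosses the bucket boundary after $s$ is set, so the reported $\beta^j$ is consistent.

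\medskip The main obstacle I anticipate is the bookkeeping around the order statistics: I must argue carefully that finishing points in (approximately) non-increasing order of $\lb$ causes the $(i+1)^\text{st}$ appended value to correspond to the $(i+1)^\text{st}$ largest distance, rather than to some point whose true distance is out of order because of the $\frac{\e}{2}$ and $\beta$ slacks. The clean way to handle this is to prove the two one-sided inequalities against \emph{order statistics of the residual set} directly, rather than tracking a global $L$, and to invoke Observation~\ref{obs:loc_lower_bounds_inc} to ensure monotonicity of finished values so that the indexing by $i$ lines up with the definition $\khdist{i}(A,B)$ as the $(i+1)^\text{st}$ largest $\dist(a,B)$.
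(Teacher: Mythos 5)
Your overall strategy matches the paper's proof: for the upper bound, apply the argument of Lemma~\ref{lem:l_plus_2r} to the residual set at the moment a node is finished; for the lower bound, combine (approximate) monotonicity of the output values with the fact that reported values lower-bound the true distances of finished points. The one genuine difference is presentational: you prove $\delta_i\le \khdist{i}(A,B)$ by a direct count (at least $i+1$ finished points have distance at least $\delta_i$, so the $(i+1)^{\text{st}}$ order statistic is at least $\delta_i$), whereas the paper argues by contradiction with an optimal subset $S\in A^{(i)}$ having the same cardinality as the residual set $S_i$, forcing $S=S_i$. These are equivalent, and your phrasing is arguably cleaner. Also note that for the upper bound you only need $\khdist{i}(A,B)\le \dist_h(S_i,B)$, which is immediate from the definition of the minimum; your remark that the residual Hausdorff distance ``equals the appropriate order statistic'' is stronger than needed and not generally true.

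Two steps, however, need repair. First, your upper-bound accounting does not deliver the stated constant. You derive $\dist(a,B)\le \lb(x)+2r\le(1+\e)\lb(x)$ and then claim the bucket rounding is ``absorbed'' because $\beta\le 1+\e$. But $\delta_i=\beta^j$ while $\lb(x)$ can be as large as $\beta\delta_i$, so your chain only gives $\dist(a,B)\le(1+\e)\beta\,\delta_i$, which exceeds $(1+\e)\delta_i$ for every $\beta>1$. The paper avoids this by keeping both slacks \emph{additive} in $\delta_i$: with $\beta=1+\frac{\e}{2}$, the choice of $s$ gives $2r\le(\beta-1)\delta_i$ and the bucket width gives $L\le\beta\delta_i$, hence $\dist_h(S_i,B)\le L+2r\le(2\beta-1)\delta_i=(1+\e)\delta_i$. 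Note also that the bound must be taken over all active nodes via $L$, not only over $\points(x)$, since other nodes (in the same bucket or below) also cover points of $S_i$. Second, your monotonicity claim rests on the wrong observation: the fact that local lower bounds can only increase (Observation~\ref{obs:loc_lower_bounds_inc}) is the \emph{threat} to monotonicity, not its cause. What makes the outputs non-increasing is that the increase is bounded---Observation~\ref{obs:same_bucket} together with the choice of $s$ guarantees no unfinished node (or its descendants) ever climbs past a bucket that has already been emptied. You do cite Observation~\ref{obs:same_bucket} for bucket consistency, so the ingredient is present, but the argument must run through it. (Both you and the paper also quietly use that the reported value lower-bounds the distance of \emph{every} leaf of a finished node, although $\lb(x)$ strictly bounds only the center's distance, with a $-\noderad{x}$ correction for the other leaves; at least you flag this term explicitly.)
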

\begin{proof}
    Let $L$ be the global lower bound when $\delta_i$ is returned.
    Let $\beta=(1 + \frac{\e}{2})$.
    Then, $L \le \beta \delta_i$.
    Furthermore, $r \le \frac{\beta-1}{2}\delta_i$ by the finishing condition.
    For each $j$, let $p_j$ denote the $j^\text{th}$ removed point.
    Let $S_i = A \backslash \{p_j \mid j \le i\}$.
    It follows by the definition of partial Hausdorff distance and Lemma~\ref{lem:l_plus_2r} that,
    \begin{align*}
        \dist_h^{(i)}(A,B) &\le \dist_h(S_i, B)\\
            &\le L+2r \\
            &\le \beta\delta_i + (\beta-1)\delta_i\\
            &= (1+\e)\delta_i.
    \end{align*}

    For the other direction, it is sufficient to show that $\delta_i\le \dist_h^{(i)}(A, B)$.
    Suppose that $\dist_h^{(i)}(A,B) < \delta_i$.
    Then, there exists some set $S \subset A$ such that $|S| = |S_i|$ and $\dist_h(S,A) = \dist_h^{(i)}(A,B)$.
    As $\dist_h(S,B) < \delta_i$, for all $a\in S$, we have $\dist(a,B)< \delta_i$.
    % Then, $\dist(a,B) < \delta_i$ for all $a \in S$.
    % It follows that $\ell(a) < \delta_i$ for all $a \in S$, because $\ell(a) \le \dist(a,B)$. (This is true at any radius.)
    By our choice of $s$ and $\beta$ the $\delta_i$ must be non-increasing.
    Therefore, none of the points in $S$ could have been removed.
    It follows that $S = S_i$ and this is a contradiction.
    Therefore, $\delta_i\le \dist_h^{(i)}(A,B) \le (1+\e)\delta_i$ for all $i \in [n]$.
\end{proof}

Now we analyze the running time of \algokHD.
\begin{lemma}\label{lem:algokhd_time}
    Let $\beta=(1 + \frac{\e}{2})$.
    \algokHD runs in $\left(2+\frac{1}{\e}\right)^{O(d)} n + O(\log_\beta\spread)$ time.
\end{lemma}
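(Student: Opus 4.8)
The plan is to bound the total running time as the sum of two contributions: the per-iteration work that matches the \algoHD analysis, plus the amortized cost of all heap operations on the bucket queue. The first contribution is already controlled by Theorem~\ref{thm:hd_correct}: the pruning condition, together with the finishing condition and the packing bound from Theorem~\ref{thm:structure}, keeps the degree of every vertex in $N$ bounded by $\left(2+\frac{1}{\e}\right)^{O(d)}$. Since each of the $O(n)$ nodes is processed once and touches only its constant-size neighborhood, the graph-update and pruning work over the whole run is $\left(2+\frac{1}{\e}\right)^{O(d)} n$. So the real task is to argue that the bucket-queue operations do not inflate this beyond an additive $O(\log_\beta\spread)$ term.

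For the heap cost, I would separate the operations into three types. First, \emph{insertions} and \emph{key updates}: each local lower bound update corresponds to a node being placed into a bucket, and by Observation~\ref{obs:loc_lower_bounds_inc} the bound increases by at most $O(r)$ per relevant event; crucially, Observation~\ref{obs:same_bucket} guarantees that once a node sits in bucket $m \ge s$, it never needs to be moved to a higher bucket, so each such update is $O(1)$ time in the bucket queue. The number of these updates is proportional to the number of edge events, which is already absorbed into the $\left(2+\frac{1}{\e}\right)^{O(d)} n$ term. Second, \emph{finishing}: each of the $n$ points is appended to the output exactly once, so the total append cost is $O(n)$. Third, \emph{bucket traversal} during remove-max: each time $r$ decreases, $s$ increases, and we scan forward through the array to the new bucket~$s$, finishing every occupied bucket we pass. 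The key observation is that $s$ is monotonically non-decreasing (since $r$ is non-increasing), so the scanning pointer only moves forward and visits each of the $O(\log_\beta\spread)$ buckets at most once over the entire execution.

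The step I expect to be the main obstacle is establishing that this forward-only traversal is actually valid, i.e., that we never have to revisit a bucket we have already passed. This is exactly what Observation~\ref{obs:same_bucket} is designed to deliver: it shows that a node whose current lower bound lies in bucket $m \ge s$ can have its lower bound grow by at most $2r \le (\beta-1)\beta^s$, which keeps it below $\beta^{m+1}$, so it stays in its bucket (or we finish it outright when its updated key would exceed $\beta^s$). I would invoke this to conclude that the monotone pointer advancing from bucket to bucket never needs to back up, making the total traversal cost telescope to $O(\log_\beta\spread)$ rather than $O(\log_\beta\spread)$ per iteration.

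Putting the three pieces together, the insertion/update and finishing work folds into the $\left(2+\frac{1}{\e}\right)^{O(d)} n$ term already accounted for, while the bucket traversal contributes a single additive $O(\log_\beta\spread)$. Summing gives the claimed bound $\left(2+\frac{1}{\e}\right)^{O(d)} n + O(\log_\beta\spread)$, completing the proof.
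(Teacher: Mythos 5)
Your proposal is correct and takes essentially the same approach as the paper's proof: the degree bound from Theorem~\ref{thm:hd_correct} absorbs the per-iteration graph and pruning work into the $\left(2+\frac{1}{\e}\right)^{O(d)}n$ term, and Observation~\ref{obs:same_bucket} (with the choice of $s$) guarantees the bucket queue is scanned in a single forward pass, contributing $O(\log_\beta\spread)$ in total. Your write-up merely makes explicit the operation-by-operation accounting that the paper's terser proof leaves implicit.
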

\begin{proof}
    In every iteration, \algokHD performs some operations on the viability graph and some operations on the lower bound heap.
    % We show that the degrees of nodes in $N$ remain constant as in Theorem~\ref{thm:hd_correct}.
    % By Theorem~\ref{thm:structure}, the neighbor graph is $\frac{(\alpha-1)r}{\alpha}$-packed.
    % Let node $x \in N$.
    % Then by Lemma~\ref{lem:std_packing} and the pruning condition,
    % \[
    %     |N(x)| \le \Big(\frac{2\alpha(L+4r)}{(\alpha-1)r}\Big)^d \le \Big(\frac{8\alpha(1+\e)}{(\alpha-1)\e}\Big)^d,
    % \]
    The degree of a node in the viability graph is $(2+\frac{1}{\e})^{O(d)}$ by the same argument as in Theorem~\ref{thm:hd_correct}.
    % Now, because all priority updates and distance computations take constant time, each iteration is $(\frac{1+\e}{\e})^{O(d)}$ time.
    % Let the local lower bound of point $p$ change from $\ell_p$ to $\ell'_p$ in an iteration.
    % We know that,
    % \[\ell'_p \le \ell_p + r \le \beta^s + \frac{\beta^{s-1}\e}{4}.\]
    % Therefore,
    % \[
    %     \lfloor \log_\beta \ell'_p \rfloor \le \left \lfloor \log_\beta\left ( \beta^{s-1} \left ( \beta + \frac \e 4 \right ) \right ) \right \rfloor = \left \lfloor s-1 +  \log_\beta \left ( \beta + \frac \e 4 \right ) \right \rfloor
    %     % \lfloor \log_\beta \ell'_p \rfloor \le \left \lfloor \log_\beta\left ( \beta^{s-1} \left ( \beta + \frac \e 4 \right ) \right ) \right \rfloor.
    % \]
    % If $\beta = (1 + \frac \e 2)$, then $\lfloor \log_\beta \ell'_p \rfloor \le s$.
    By Observation~\ref{obs:same_bucket} and our choice of $s$, all partial distances can be computed in a single traversal of the lower bound heap.
    Therefore, traversing the lower bound heap takes $O(\log_\beta\spread)$ time over the course of the whole algorithm.
    Thus, the total running time is $(2+\frac{1}{\e})^{O(d)}n + O(\log_\beta\spread)$.
\end{proof}

By Lemmas~\ref{lem:algokhd_correct} and~\ref{lem:algokhd_time}, we conclude the following theorem.
\begin{theorem}\label{thm:algokhd}
    Let $\beta=(1 + \frac{\e}{2})$.
    \algokHD computes a $(1+\e)$-approximation of all partial Hausdorff distances in $\left(2+\frac{1}{\e}\right)^{O(d)} n + O(\log_\beta\spread)$ time.
  \end{theorem}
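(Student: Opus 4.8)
The plan is to combine the two preceding lemmas, which between them already carry all the technical weight. Theorem~\ref{thm:algokhd} is a direct corollary of Lemma~\ref{lem:algokhd_correct} (correctness of the output sequence) and Lemma~\ref{lem:algokhd_time} (the running-time bound), so the proof reduces to assembling these two facts and observing that together they yield exactly the claimed statement.

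First I would invoke Lemma~\ref{lem:algokhd_correct}, which establishes that the returned sequence $(\delta_0,\dots,\delta_{n-1})$ satisfies $\delta_i \le \dist_h^{(i)}(A,B) \le (1+\e)\delta_i$ for every $i$. This is precisely the assertion that each $\delta_i$ is a $(1+\e)$-approximation of the $i^{\textrm{th}}$-partial directed Hausdorff distance, so the full list of approximate partial Hausdorff distances is computed correctly. Then I would invoke Lemma~\ref{lem:algokhd_time}, which bounds the running time of \algokHD by $\left(2+\frac{1}{\e}\right)^{O(d)} n + O(\log_\beta\spread)$ with $\beta = 1 + \frac{\e}{2}$. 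Combining the two gives the theorem immediately.

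The only point requiring any care is to confirm that the two lemmas are stated with a consistent choice of $\beta$: Lemma~\ref{lem:algokhd_correct} fixes $\beta = 1 + \frac{\e}{2}$ in the course of its proof (to make the arithmetic $\beta\delta_i + (\beta-1)\delta_i = (1+\e)\delta_i$ work out), and Lemma~\ref{lem:algokhd_time} is stated with the same $\beta$. Since the theorem's hypothesis also sets $\beta = 1 + \frac{\e}{2}$, the bucket-queue granularity used to achieve the running time is exactly the one under which correctness was proved, so there is no tension between the approximation factor and the time bound. I would note this alignment explicitly to make the corollary airtight.

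There is essentially no main obstacle here, since both halves have been discharged: the approximation guarantee rests on Observation~\ref{obs:same_bucket} and Lemma~\ref{lem:l_plus_2r} (already proved), and the linear term in the running time rests on the constant-degree property of the viability graph inherited from Theorem~\ref{thm:hd_correct}, while the additive $O(\log_\beta\spread)$ term comes from the single monotone traversal of the bucket queue. The proof is therefore a one-line citation of Lemmas~\ref{lem:algokhd_correct} and~\ref{lem:algokhd_time}.
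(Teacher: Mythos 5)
Your proposal is correct and matches the paper exactly: the paper derives Theorem~\ref{thm:algokhd} as an immediate consequence of Lemma~\ref{lem:algokhd_correct} (approximation guarantee) and Lemma~\ref{lem:algokhd_time} (running time), with no additional argument. Your added remark about the consistent choice of $\beta = 1 + \frac{\e}{2}$ across both lemmas is a reasonable clarification but not a departure from the paper's approach.
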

% flatex input end: [kpartial_journal_2.tex]

%------------------------------ Text -------------------------------------

    \section{An Application that Amortizes the Preprocessing Time}\label{sec:mds}
    % flatex input: [mds.tex]
The preprocessing time can be amortized in applications where many Hausdorff distance computations are required.
For example, in metric multi-dimensional scaling (MDS), all pairwise distances are computed among a sets to give a low-dimensional embedding.
\algoHD takes $O(n\log\spread)$ preprocessing time to construct a greedy tree for each input set.
These greedy trees can be re-used in later distance computations.
Thus, for large collections of sets, the preprocessing cost does not impact the asymptotic running time.
As detailed below, this improves on the running time by a factor of $\log n$ over prior methods.

The goal of MDS is to embed a sample of a metric space into a lower dimensional Euclidean subspace.
In this case, the metric space is defined by the Hausdorff distance on subsets of points.
Assume we are given $k$ subsets, $X_1,\dots, X_k$, each of size $n$.
The problem is to compute an MDS embedding of these $k$ sets under the Hausdorff metric.
Then to obtain the input for MDS we need to compute the $\binom{k}{2}$ pairwise Hausdorff distances.
It takes $O(kn\log\spread)$ time to compute the $k$ greedy trees, one for each subset $X_i$.
Each distance is computed in $\e^{-O(d)}n$ time, so the total running time $T$, including preprocessing time, is
\[
    T(n,k) = \underbrace{kn\log \spread}_{\text{greedy tree construction}} + \underbrace{k^2\e^{-O(d)}n}_{\text{distance computations}} = kn(\log \spread + k\e^{-O(d)}).
\]

Thus, for computing the input to MDS, the greedy tree construction time becomes insignificant when $\log \spread$ is $k\e^{-O(d)}$.
In that case, 
\[ T(n, k) = k^2\e^{-O(d)}n.\]

%Suppose another data structure that computed an approximate nearest neighbor in $\e^{-O(d)}\log n$ time was used for this application.
In comparison with results from the literature, Hausdorff distances can easily be computed in $\e^{-O(d)} n\log n$ time using any data structure supporting a $\e^{-O(d)}\log n$ time approximate nearest neighbor search~\cite{krauthgamer04navigating,ram09linear,chubet23proximity,Arya98Optimal}.
Thus, one could compute the input for MDS in $O(k^2n \log n)$ time.
Using greedy trees improves this algorithm by a factor of $\log n$.

% flatex input end: [mds.tex]

%------------------------------ Text -------------------------------------

    \section{Conclusion}\label{sec:conclusion}
    % flatex input: [conclusion.tex]
We presented an algorithm to approximate the directed Hausdorff distance that runs in $(2+\frac{1}{\e})^{O(d)}n$ time after computing the greedy trees.
The benefits of preprocessing outweigh the costs if the same sets are involved in multiple distance computations, such as in MDS computations.
With some modifications, the same algorithm can be used to compute all $(1+\e)$-approximate $k$-partial Hausdorff distances in $(2+\frac{1}{\e})^{O(d)}n + O(\log_\beta \spread)$ time.

% We believe computing the directed Hausdorff distance should be $\Omega(n^2)$ if one has access to distances only (no hashing).
% Running time lower bounds have been proven for various variants of Hausdorff distances~\cite{bringmann21translating,chew98getting,rucklidge96lower}, but, to the best of our knowledge, not for the general distance.
% This is a problem we would like to explore in the future.

% flatex input end: [conclusion.tex]

%------------------------------ Text -------------------------------------

    \small
%*flatex input: [journal_version.bbl]

% flatex input end: [journal_version.bbl]
%FLATEX-REM:    \bibliographystyle{abbrv}
%FLATEX-REM:    \bibliography{bibliography}

    % \section*{Appendix: Greedy Trees}
    % \label{sec:apx_greedy_tree}
    % \input{appendix_greedy_tree}
\end{document}